\theoremstyle{definition}
\newtheorem{algorithm}{Algorithm}
\newcommand\numberthis{\addtocounter{equation}{1}\tag{\theequation}}
\newcommand{\cas}{\mathrm{cas}}
\newcommand{\cht}{\mathsf{H}}
\newcommand{\qht}{\mathsf{QHT}}
\newcommand{\cft}{\mathsf{F}}
\newcommand{\qft}{\mathsf{QFT}}
\newcommand{\comph}{\mathsf{cmpIndex}}
\newcommand{\gen}{\mathsf{Gen}}
\newcommand{\ver}{\mathsf{Ver}}
\title{Public-Key Quantum Money and Fast Real Transforms}
\author{Jake Doliskani\thanks{\tt jake.doliskani@mcmaster.ca} }
\author{Morteza Mirzaei\thanks{\tt mirzam48@mcmaster.ca} }
\author{Ali Mousavi\thanks{\tt mousas26@mcmaster.ca} }
\affil{Department of Computing and Software, McMaster University}
\date{}
\begin{document}
\maketitle

\begin{abstract}
    We propose a public-key quantum money scheme based on group actions and the Hartley transform. Our scheme adapts the quantum money scheme of Zhandry (2024), replacing the Fourier transform with the Hartley transform. This substitution ensures the banknotes have real amplitudes rather than complex amplitudes, which could offer both computational and theoretical advantages.

    To support this new construction, we propose a new verification algorithm that uses group action twists to address verification failures caused by the switch to real amplitudes. We also show how to efficiently compute the serial number associated with a money state using a new algorithm based on continuous-time quantum walks. Finally, we present a recursive algorithm for the quantum Hartley transform, achieving lower gate complexity than prior work, and demonstrate how to compute other real quantum transforms, such as the quantum sine transform, using the quantum Hartley transform as a subroutine.
\end{abstract}

\newpage
\section{Introduction}
\label{sec:intro}

In a seminal paper, Wiesner \cite{wiesner1983conjugate} introduced the concept of quantum money, where bills are represented by quantum states. In contrast with classical money, quantum money cannot be counterfeited using general copying machines. This is a consequence of a fundamental theorem in quantum mechanics called the no-cloning theorem. Wiesner's scheme, which in modern terminology is called a private-key quantum money scheme, had significant drawbacks. In particular, in a private-key scheme the bank is required to verify each bill, meaning it must be involved in every transaction. This is the main reason that these schemes are not generally practical.

In 2009, Aaronson \cite{aaronson2009quantum} proposed the first concrete proposal for a different type of quantum money, known as \textit{public-key} quantum money, that did not have such drawbacks. In public-key quantum money anyone can verify the bill while only the bank can issue it. Aaronson's scheme was later broken by Lutomirski et al. \cite{lutomirski2009breaking}. In the years since, several alternative constructions have been explored \cite{aaronson2012quantum, farhi2012quantum, zhandry2021quantum, kane2021quantum, khesin2022publicly, liu2023another, zhandry2024quantum}, yet each has either been broken \cite{conde2019non, roberts2021security, bilyk2023cryptanalysis, liu2023another} or relies on non-standard cryptographic assumptions.

\paragraph{Quantum money from group actions and the Fourier transform.}
A candidate for secure public-key quantum money based on abelian group actions was recently proposed by Zhandry \cite{zhandry2024quantum}. In this scheme, money states correspond to group-action Fourier states, and serial numbers are elements of a group; see Section \ref{sec:qm_Fourier}. Verification is performed using a group-action phase kickback unitary that extracts the serial number from the money state. It was later proved by Doliskani \cite{cryptoeprint:2025/092} that the scheme is secure in the generic group action model.

\subsection{This work}

\paragraph{Motivation.}
We propose a public-key quantum money scheme, which is an adaptation of Zhandry's scheme, using the quantum Hartley transform over finite abelian groups. The motivation behind this proposal is multifold. First, as a consequence of using the Hartley transform, the banknotes will have real amplitudes. This is in contrast to the original scheme, in which the banknotes have complex amplitudes. We believe that this might lead to both computational and theoretical advantages. As a concrete example of a theoretical difference between real and complex quantum states, consider the following scenario: given any two real orthonormal bases $\{\ket{\phi_j}\}$ and $\{\ket{\psi_j}\}$ of a Hilbert space $\X$, one can easily show that
\[ \sum_j \ket{\phi_j} \ket{\phi_j} = \sum_j \ket{\psi_j} \ket{\psi_j}. \]
Such an identity does not, in general, hold for complex bases, and this is the main reason Theorem 4.4 of \cite{zhandry2024quantum} fails to prove that the scheme is a quantum lightning scheme.

Another motivation behind our instantiation is that this work can serve as a starting point for the use of real quantum transforms in quantum cryptography and, more broadly, in quantum computing. To the best of our knowledge, this is the first time the quantum Hartley transform has been successfully used in a significant quantum computing construction. Finally, we hope that this work itself serves as motivation for further research on the optimization of real quantum transforms.

\paragraph{Contributions.}
We propose new algorithms for both the verification of the quantum money scheme and the efficient computation of real quantum transforms.

Replacing the quantum Fourier transform with the quantum Hartley transform in Zhandry's scheme is a straightforward task, as the Hartley transform enjoys essentially the same arithmetic properties as the Fourier transform. However, the verification algorithm under the Hartley transform breaks down, in the sense that it fails to distinguish between certain banknotes. As a result, the verification algorithm accepts some illegitimate banknotes with probability $1$. To address this issue, we propose a new verification algorithm that relies on group action \textit{twists}.

Verification algorithms, in any quantum money scheme, take as input the money state along with its associated serial number. However, we show that the serial number of any given money state can be efficiently computed. To achieve this, we propose a new algorithm based on continuous-time quantum walks.

With regard to efficient real quantum transforms, we present a new algorithm for the quantum Hartley transform that exploits its recursive structure. A similar recursive algorithm was previously proposed in \cite{agaian2002quantum}, based on a well-known decomposition of the Hartley transform. Our algorithm, however, is simpler to implement and easier to analyze, which allows us to provide a more explicit estimate of its gate complexity. We also compare our algorithm to that of \cite{klappenecker2001irresistible}, which uses the quantum Fourier transform as a subroutine. Compared to both these algorithms, we demonstrate that our algorithm achieves lower gate complexity. Finally, we show how to efficiently implement other real quantum transforms using the quantum Hartley transform, by presenting an efficient algorithm for the quantum sine transform that uses the quantum Hartley transform as a subroutine.

\section{Preliminaries}
\label{sec:prelim}

We follow the presentation of \cite{cryptoeprint:2025/092} regarding group actions and quantum computation throughout this paper. A finite Hilbert space $\X$ of dimension $N$ is a complex Euclidean space, isomorphic as a $\C$-vector space to $\C^N$. An $N$-dimensional quantum state is represented by a unit vector $\ket{\psi} \in \mathcal{H}$, where $\ket{\cdot}$ denotes Dirac notation. In practice, we typically use specific bases for a Hilbert space. For example, when considering a finite group $G$, we work with the Hilbert space $\X = \C^G$, which is a $\abs{G}$-dimensional space spanned by the basis $\{\ket{g} : g \in G\}$. 

\subsection{Group actions}

For a group $G$ and a set $X$, we say that $G$ acts on $X$ if there is a mapping $*: G \times X \to X$ that satisfies the following properties:
\begin{enumerate}
    \item Compatibility: for every $g, h \in G$ and every $x \in X$, $g * (h * x) = (gh) * x$,
    \item Identity: for the identity $1 \in G$ and every $x \in X$, $1 * x = x$. 
\end{enumerate}
We use the notation $(G, X, *)$ to denote a group $G$ acting on a set $X$ through the action $*$. A group action is called \textit{regular} if for every $x, y \in X$ there exists a unique $g \in G$ such that $g * x = y$. In order for a group action $(G, X, *)$ to be suitable for algorithmic applications, the group $G$, the set $X$, and the action $*$ must adhere to specific properties. This motivates the concept of an \textit{effective group action}.

Let $G$ be a finite group and $X$ a finite set. A group action $(G, X, *)$ is said to be effective if it satisfies the following properties:
\begin{enumerate*}[label = \roman*)]
    \item There are efficient algorithms for elementary operations such as membership testing, equality testing, sampling uniform elements and group operation for $G$,
    \item There are efficient algorithms for membership testing and unique representation in $X$, and
    \item There exists an efficient algorithm for the action $*$.
\end{enumerate*}

In this paper, we assume that all group actions are effective and, unless otherwise stated, that we are working with regular group actions. A central problem in group-action cryptography is the discrete logarithm problem for group actions, defined as follows:
\begin{definition}[Group Action DLP]
    Let $(G, X, *)$ be an effective group action. Given a pair $(x, g * x)$, where $g \in G$, the discrete logarithm problem (DLP) is to compute $g$.
\end{definition}
We say that the DLP assumption holds for $(G, X, *)$ if no quantum polynomial time (QPT) algorithm can solve the DLP with respect to $(G, X, *)$. A group action for which the DLP assumption holds is referred to as a \textit{cryptographic group action}.

\subsection{The Fourier transform}

Let $G$ be an abelian group. The set of characters of $G$, denoted by $\hat{G}$, is the set of homomorphisms $\chi(a, \cdot): G \to \C$ where $a \in G$. If $G \cong \Z_{N_1} \oplus \cdots \oplus \Z_{N_k}$ then the character $\chi(a, \cdot)$ can be explicitly written as
\[ \chi(a, x) = \omega_{N_1}^{a_1x_1} \cdots \omega_{N_k}^{a_kx_k} \]
where $\omega_M = \exp(2\pi i/ M)$ is a primitive $M$-th root of unity. The Fourier transform of a function $f: G \to \C$ is given by
\[ \hat{f}(a) = \frac{1}{\sqrt{\abs{G}}} \sum_{x \in G} \chi(a, x) f(x). \]
The quantum Fourier transform of a (normalized) state $\sum_{g \in G} f(g) \ket{g}$ is given by $\sum_{x \in G} \hat{f}(x) \ket{x}$. For a regular group action $(G, X, *)$, any subset $S \subseteq G$, any $y \in X$, and any $h \in G$, we define
\begin{equation}
    \label{eq:x-fourier-basis}
    \ket{S^{(h)} * y} = \frac{1}{\sqrt{\abs{S}}} \sum_{g \in S} \chi(g, h) \ket{g * y}.
\end{equation}
There are two orthonormal bases of the space $\C^X$. One basis is $\{ \ket{x} : x \in X \}$. For a fixed element $x \in X$, this basis is the same as $\{\ket{g * x} : g \in G \}$, which follows from the fact that the action is regular and thus $\abs{X} = \abs{G}$. The other basis is given by the states
\[ \ket{G^{(h)} * x} = \frac{1}{\sqrt{\abs{G}}} \sum_{g \in G} \chi(g, h) \ket{g * x}, \quad h \in G. \]
These states are simultaneous eigenstates of the group action operation. Specifically, for the unitary $U_k: \ket{y} \mapsto \ket{k * y}$, where $k \in G$, we have $U_k \ket{G^{(h)} * x} = \chi(-k, h) \ket{G^{(h)} * x}$. These states resemble the set of Fourier states over the abelian group $G$. We will also sometimes refer to them as \textit{Fourier states}.

\paragraph{The $\comph$ algorithm.}
Given a state $\ket{G^{(h)} * x}$, there is an efficient algorithm for computing $h$. Specifically, there is a unitary operator that performs the transformation $\ket{G^{(h)} * x} \ket{0} \mapsto \ket{G^{(h)} * x} \ket{h}$ using the \textit{phase kickback} technique. To see this, start with the state $\ket{G^{(h)} * x} \ket{0}$, apply the quantum Fourier transform to the second register, and then apply the unitary $\sum_{k \in G} U_k \otimes \ket{k}\bra{k}$ to both registers. This results in the state
\[ \frac{1}{\sqrt{\abs{G}}} \sum_{k \in G} \ket{G^{(h)} * x} \chi(-k, h) \ket{k}. \]
Finally, applying the inverse quantum Fourier transform to the second register yields $\ket{G^{(h)} * x} \ket{h}$.

\section{Fast Real Transforms}

\paragraph{The Hartley transform.}
Let $N$ be a positive integer, and let $\Z_N$ be the additive cyclic group of integers modulo $N$. The Hartley transform of a function $f: \Z_N \to \R$ is the function $\cht_N(f): \Z_N \to \R$ defined by
\[ \cht_N(f)(a) = \frac{1}{\sqrt{N}} \sum_{y = 0}^{N - 1} \cas\Big(\frac{2 \pi ay}{N}\Big) f(y),  \]
where $\cas(x) = \cos(x) + \sin(x)$. Like the Fourier transform, $\cht_N$ is a linear operator, and it can be easily shown that it is unitary. The Hartley transform can be defined for general abelian groups. Let $G = \Z_{N_1} \oplus \Z_{N_2} \oplus \cdots \oplus \Z_{N_k}$ be a decomposition of an abelian group $G$ into cyclic groups. For a function $f: G \to \R$, the transform $\cht_G(f): G \to \R$ is defined by
\begin{equation}
    \label{eq:cht-add}
    \cht_G(f)(a_1, \dots, a_k) = \frac{1}{\sqrt{\abs{G}}} \sum_{y_1 = 0}^{N_1 - 1} \cdots \sum_{y_k = 0}^{N_k - 1} \cas\Big(\frac{2 \pi a_1y_1}{N_1} + \cdots + \frac{2 \pi a_ky_k}{N_k}\Big) f(y_1, \dots, y_k),
\end{equation}
where $(a_1, \dots, a_k) \in G$. Let $\alpha(\bm{y}) = y_1 / N_1 + \cdots + y_k / N_k$. A compact form of the expression for $\cht_G$ is
\[ \cht_G(f)(\bm{a}) = \frac{1}{\sqrt{\abs{G}}} \sum_{\bm{y} \in G} \cas(2\pi \lrang{\bm{a}, \alpha(\bm{y})}) f(\bm{y}). \]

It follows from the identity
\[ \cas(2\pi \lrang{\bm{a}, \alpha(\bm{y})}) = \frac{1 - i}{2}\chi(\bm{a}, \bm{y}) + \frac{1 + i}{2}\chi(-\bm{a}, \bm{y}) \]
that
\begin{equation}
    \label{eq:ht-ft}
    \cht_G = \frac{1 - i}{2}\cft_G + \frac{1 + i}{2}\cft_G^*,
\end{equation}
where $\cft_G$ is the Fourier transform over $G$. This proves that $\cht_G^2 = \mathds{1}$, implying that $\cht_G$ is also a unitary transform. Another variant of the Hartley transform used in the literature is a multiplicative variant defined as
\begin{equation}
    \label{eq:cht-mult}
    \cht_G(f)(a_1, \dots, a_k) = \frac{1}{\sqrt{\abs{G}}} \sum_{y_1 = 0}^{N_1 - 1} \cdots \sum_{y_k = 0}^{N_k - 1} \mathrm{cas}\Big(\frac{2 \pi a_1y_1}{N_1}\Big) \cdots \cas\Big(\frac{2 \pi a_ky_k}{N_k}\Big) f(y_1, \dots, y_k).
\end{equation}
This is also a unitary transform as shown by the following lemma.

\begin{lemma}
    The Hartley transform \eqref{eq:cht-mult} is unitary. 
\end{lemma}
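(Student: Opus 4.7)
The plan is to observe that the multiplicative kernel factorizes as a product over the cyclic factors of $G$, so the transform itself factorizes as a tensor product of one-dimensional Hartley transforms, and then invoke the already-established unitarity of $\cht_N$ for cyclic groups.

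First, I would rewrite the matrix entry of \eqref{eq:cht-mult} as
\[ \cht_G[\bm{a}, \bm{y}] = \frac{1}{\sqrt{\abs{G}}} \prod_{j=1}^{k} \cas\!\Big(\frac{2\pi a_j y_j}{N_j}\Big) = \prod_{j=1}^{k} \cht_{N_j}[a_j, y_j], \]
using that $\abs{G} = N_1 \cdots N_k$. This identity is exactly the statement that, with respect to the tensor decomposition $\C^{G} \cong \C^{N_1} \otimes \cdots \otimes \C^{N_k}$ induced by $G = \Z_{N_1} \oplus \cdots \oplus \Z_{N_k}$, the operator satisfies
\[ \cht_G = \cht_{N_1} \otimes \cht_{N_2} \otimes \cdots \otimes \cht_{N_k}. \]

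Next, each factor $\cht_{N_j}$ is the standard one-dimensional cyclic Hartley transform, which was already shown to be unitary via the decomposition \eqref{eq:ht-ft} (applied in the cyclic case), since it yields $\cht_{N_j}^2 = \mathds{1}$ and $\cht_{N_j}$ is real and symmetric. A tensor product of unitaries is unitary: $(\bigotimes_j U_j)(\bigotimes_j U_j)^* = \bigotimes_j U_j U_j^* = \mathds{1}$. Applying this to the $\cht_{N_j}$ concludes the proof.

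There is no real obstacle here; the only thing worth double-checking is the normalization, namely that the product $\prod_j (1/\sqrt{N_j})$ matches $1/\sqrt{\abs{G}}$, which is immediate. I would note in passing that the additive variant \eqref{eq:cht-add} does \emph{not} factor in this way, which is why it required the separate argument via \eqref{eq:ht-ft}; the multiplicative variant \eqref{eq:cht-mult} trivializes precisely because its kernel is a product rather than a $\cas$ of a sum.
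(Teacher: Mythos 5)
Your proof is correct and follows essentially the same route as the paper: both identify the multiplicative kernel as a product over the cyclic factors, conclude that $\cht_G = \cht_{N_1} \otimes \cdots \otimes \cht_{N_k}$, and invoke the unitarity of each cyclic $\cht_{N_j}$. Your version just spells out the kernel factorization and the normalization check more explicitly than the paper does.
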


\begin{proof}
    If we represent the function $f$ as a vector indexed by the elements of $G$, the matrix representing $\cht_G$ is the tensor product of the matrices representing the Hartley transform over each component $\Z_{N_i}$ of $G$. More precisely, $\cht_G = \cht_{N_1} \otimes \cht_{N_2} \otimes \cdots \otimes \cht_{N_k}$. Since each $\cht_{N_i}$ is unitary, it follows that $\cht_G$ is also unitary.
\end{proof}

The quantum Hartley transform over an abelian group $G$ is defined as
\begin{equation}
    \label{eq:qht}
    \qht_G: \sum_{\bm{x} \in G} f(\bm{x}) \ket{\bm{x}} \mapsto \sum_{\bm{a} \in G} \cht_G(f)(\bm{a}) \ket{\bm{a}},
\end{equation}
where we have skipped the normalization for the left hand side state. Here, $\cht_G(f)$ can be either of the transforms \eqref{eq:cht-add} or \eqref{eq:cht-mult}. For a single basis element of the cyclic group $\Z_N$, the quantum Hartley transform simplifies to
\begin{equation}
	\label{eq:qht-N}
	\qht_N: \ket{a} \mapsto \frac{1}{\sqrt{N}} \sum_{y = 0}^{N - 1} \cas\Big(\frac{2\pi a y}{N}\Big) \ket{y}.
\end{equation}

\paragraph{The sine and cosine transforms.}
Two other widely used real transforms are the discrete sine and cosine transforms. There are different versions of these transforms, but here we briefly introduce the first sine transform, $S_{N - 1}^\text{I}$, and the first cosine transform, $C_{N + 1}^\text{I}$. For more details, see  \cite{rao2014discrete} or Section \ref{sec:fast-sine}. The sine transform $S_{N - 1}^\text{I}$ of a function $f: \Z_N \to \R$ is given by
\[ S_{N - 1}^\text{I}(f)(a) = \left( \frac{2}{N} \right)^{1 / 2} \sum_{y = 1}^{N - 1} \sin\left( \frac{\pi a y}{N} \right) f(y). \]
The quantum version of this transform is defined by
\[ \mathsf{QS}_{N - 1}^\text{I} \ket{a} = \left( \frac{2}{N} \right)^{1 / 2} \sum_{y = 1}^{N - 1} \sin\left( \frac{\pi a y}{N} \right) \ket{y}. \]
The cosine transform of a function $f: \Z_N \to \R$ is given by
\[ C_{N + 1}^\text{I}(f)(a) = \left( \frac{2}{N} \right)^{1 / 2} \sum_{y = 0}^{N} k_a k_y \cos\left( \frac{\pi a y}{N} \right) f(y), \]
where $k_j = 1 / \sqrt{2}$ for $j = 0, N$ and $k_j = 1$ for $j \ne 0, N$. The quantum version of this transform is defined by
\[ \mathsf{QC}_{N + 1}^\text{I} \ket{a} = \left( \frac{2}{N} \right)^{1 / 2} \sum_{y = 0}^{N} k_a k_y \cos\left( \frac{\pi a y}{N} \right) \ket{y}, \]

\subsection{Efficient Quantum Real Transforms}
\label{sec:fast-qht}

The first set of efficient algorithms for computing the quantum Hartley transform ($\qht_N$) and the quantum sine and cosine transforms was introduced in \cite{klappenecker2001discrete, klappenecker2001irresistible}. A recursive algorithm for $\qht_N$ was later proposed in \cite{agaian2002quantum}. In this section, we briefly review the algorithms for $\qht_N$, $\mathsf{QS}_{N - 1}^\text{I}$, and $\mathsf{QC}_{N + 1}^\text{I}$. We refer the reader to \cite{klappenecker2001irresistible} for other versions of the sine and cosine transforms. 

\paragraph{$\qht$ using $\qft$.}
To compute the $\qht_N$, the identity in \eqref{eq:qht-qft} is used as follows. Define the conditional Fourier transform $F = \ket{0}\bra{0} \otimes \mathds{1} + \ket{1}\bra{1} \otimes \qft_N$, and the unitary
\[ R = \frac{1}{2}
    \begin{bmatrix}
        1 - i & 1 + i \\
        1 + i & 1 - i
    \end{bmatrix}.
\]
Given $a \in \Z_N$, the algorithm initializes an ancilla qubit in the zero state and then performs the following operations:
\begin{align*}
    \ket{0}\ket{a}
    & \mapsto \frac{1}{\sqrt{2}} (\ket{0} + \ket{1}) \ket{a} & (H \otimes \mathds{1}) \\
    & \mapsto \frac{1}{\sqrt{2}} (\ket{0} + \ket{1}) \qft_N\ket{a} & (\mathds{1} \otimes \qft_N) \\
    & \mapsto \frac{1}{\sqrt{2}} \ket{0} \qft_N\ket{a} + \frac{1}{\sqrt{2}} \ket{1} \qft_N^3\ket{a} & (F^2) \\
    & \mapsto \frac{1}{\sqrt{2}} \ket{0} \qht_N\ket{a} + \frac{1}{\sqrt{2}} \ket{1} \qft_N^{-2} \qht_N\ket{a} & (R \otimes \mathds{1}) \\
    & \mapsto \frac{1}{\sqrt{2}} (\ket{0} + \ket{1}) \qht_N\ket{a} & (F^2) \\
    & \mapsto \ket{0} \qht_N\ket{a}. & (H \otimes \mathds{1}) \numberthis\label{eq:qht-qft}
\end{align*}
The operation applied at each step is indicated on the right.

\paragraph{Recursive $\qht$.}
The recursive algorithm proposed in \cite{agaian2002quantum} for computing $\qht$, uses the following decomposition \cite{wickerhauser1996adapted}
\begin{equation}
    \label{eq:qht_decomp}
    \qht_N = \frac{1}{\sqrt{2}}
    \begin{bmatrix}
        \mathds{1}_{N / 2} & \mathds{1}_{N / 2} \\
        \mathds{1}_{N / 2} & -\mathds{1}_{N / 2}
    \end{bmatrix}
    \begin{bmatrix}
        \mathds{1}_{N / 2} & 0 \\
        0 & BC_{N / 2}
    \end{bmatrix}
    \begin{bmatrix}
        \qht_{N / 2} & 0 \\
        0 & \qht_{N / 2}
    \end{bmatrix}
    Q_N.
\end{equation}
Both the operators $BC_{N / 2}$ and $Q_N$ can be efficiently implemented as unitary operators.

\paragraph{Quantum sine and cosine.}
To compute $\mathsf{QS}_{N - 1}^\text{I}$ and $\mathsf{QC}_{N + 1}^\text{I}$, a base change unitary $T_N$ is used, which satisfies the identity
\[ T_N^* \cdot \qft_{2N} \cdot T_N = \mathsf{QC}_{N + 1}^\text{I} \oplus i\mathsf{QS}_{N - 1}^\text{I}. \]
The action of $T_N$ is described by
\begin{equation}
    \label{eq:T_N-sine}
    T_N \ket{bx} =
    \begin{cases}
        \ket{bx} & \text{ if } x = 0, \\
        \frac{1}{\sqrt{2}} (\ket{0x} + \ket{1x'}) & \text{ if } x \ne 0, b = 0, \\
        \frac{i}{\sqrt{2}} (\ket{0x} - \ket{1x'}) & \text{ if } x \ne 0, b = 1,
    \end{cases}
\end{equation}
where $x' = N - x$ is the two’s complement of $x$. The unitary $T_N$ can be efficiently implemented using elementary gates. Therefore, the sine and cosine transforms above can be efficiently computed by adding an ancilla qubit and applying $T_N$, $\qft_{2N}$, and $T_N^*$. Note that 
\[ T_N^* \cdot \qft_{2N} \cdot T_N = \ket{0}\bra{0} \otimes \mathsf{QC}_{N + 1}^\text{I} + \ket{1}\bra{1} \otimes i\mathsf{QS}_{N - 1}^\text{I}. \]
Thus, to compute $\mathsf{QC}_{N + 1}^\text{I} \ket{a}$, the state $\ket{0} \ket{a}$ must be prepared, whereas to compute $\mathsf{QS}_{N - 1}^\text{I} \ket{a}$, the state $\ket{1} \ket{a}$ must be prepared.

\section{A New Algorithm for $\qht$}
\label{sec:new-qht}

Our algorithm for the quantum Hartley transform, $\qht_N$, is inspired by the recursive algorithm for computing the quantum Fourier transform ($\qft_N$). Let us briefly explain how the algorithm for $\qft_N$ works. For simplicity, we assume $N = 2^n$, so that elements of $\Z_N$ are represented using exactly $n$ qubits. The generalization to arbitrary $N$ can be achieved by following the same recursive approach. Given $a \in \Z_N$, the expression for $\qft_N$ is written as follows:
\begin{align*}
    \qft_N\ket{a}
    & = \frac{1}{\sqrt{N}} \sum_{y = 0}^{N - 1} \omega_N^{ay}\ket{y} \\
    & = \frac{1}{\sqrt{N}} \sum_{y = 0}^{N / 2 - 1} \omega_N^{ay} \ket{y} + (-1)^a \frac{1}{\sqrt{N}} \sum_{y = 0}^{N / 2 - 1} \omega_N^{ay} \ket{y + N/2} \\
    & = \frac{1}{\sqrt{N / 2}} \sum_{y = 0}^{N / 2 - 1} \omega_N^{ay} \frac{1}{\sqrt{2}} (\ket{0} + (-1)^a \ket{1}) \ket{y}, \numberthis\label{eq:qft_alt}
\end{align*}
where, in the last equation, we have separated the first qubit for clarity. Let $\ket{a} = \ket{t}\ket{b}$, where $b$ is the least significant bit of $a$, so that $a = 2t + b$ for some $t \in \Z_{N / 2}$. Applying $\qft_{N / 2}$ to the first register, we obtain the state
\[ \frac{1}{\sqrt{N / 2}} \sum_{y = 0}^{N / 2 - 1} \omega_N^{2ty} \ket{y} \ket{b}. \]
Next, we apply the phase unitary $P(y, b): \ket{y} \ket{b} \mapsto \omega_N^{by} \ket{y} \ket{b}$, and finally, we apply a Hadamard transform to the last qubit. The result is the state in \eqref{eq:qft_alt}.

We now present our algorithm for the efficient computation of the quantum Hartley transform $\qht_N$. The idea is to exploit the recursive structure of $\qht_N$, similar to the approach we used for $\qft_N$. To this end, we first rewrite the sum in \eqref{eq:qht-N} to reveal its recursive nature. We proceed as follows:
\begin{equation}
    \label{eq:cas-expand}
	\frac{1}{\sqrt{N}} \sum_{y = 0}^{N - 1} \cas\Big( \frac{2\pi a y}{N} \Big) \ket{y}
    = \frac{1}{\sqrt{N}} \sum_{y = 0}^{N / 2 - 1} \cas\Big( \frac{2\pi a y}{N} \Big) \ket{y} + \frac{1}{\sqrt{N}} \sum_{y = N / 2}^{N - 1} \cas\Big( \frac{2\pi a y}{N} \Big) \ket{y}.
\end{equation}
The second sum in the right-hand side can be written as
\begin{align*}
	\sum_{y = N / 2}^{N - 1} \cas\Big( \frac{2\pi a y}{N} \Big) \ket{y}
    & = \sum_{y = 0}^{N / 2 - 1} \cas\Big( \frac{2\pi a y}{N} + \pi a \Big) \ket{y + N/2} \\
    & = (-1)^a \sum_{y = 0}^{N / 2 - 1} \cas\Big( \frac{2\pi a y}{N} \Big) \ket{y + N/2},
\end{align*}
where the second equality follows from the $\cas$ angle-sum identity $\cas(\alpha + \beta) = \cos(\alpha) \cas(\beta) + \sin(\alpha) \cas(-\beta)$, and the fact that $\cos(\pi a) = (-1)^a$ for all $a$. Substituting this into \eqref{eq:cas-expand} gives
\begin{align}
	\frac{1}{\sqrt{N}} \sum_{y = 0}^{N - 1} \cas\Big( \frac{2\pi a y}{N} \Big) \ket{y}
    & = \frac{1}{\sqrt{N}} \sum_{y = 0}^{N / 2 - 1} \cas\Big( \frac{2\pi a y}{N} \Big) (\ket{y} + (-1)^a \ket{y + N/2}) \nonumber \\
    & = \frac{1}{\sqrt{N / 2}} \sum_{y = 0}^{N / 2 - 1} \cas\Big( \frac{2\pi a y}{N} \Big) \frac{1}{\sqrt{2}} (\ket{0} + (-1)^a \ket{1}) \ket{y}, \label{eq:qht-alt} 
\end{align}
where, in the last equality, the most significant qubit has been separated to highlight its role.

We now show how to compute $\qht_N$ recursively. Given $a \in \Z_N$, we once again express $\ket{a} = \ket{t}\ket{b}$, where $b$ is the least significant bit, so that $a = 2t + b$ for some $t \in \Z_{N / 2}$. Suppose we already have an efficient circuit for computing $\qht_{N / 2}$. To compute the Hartley transform of $\ket{a}$, we introduce an ancilla qubit, resulting in the state $\ket{0} \ket{t} \ket{b}$. Then
\begin{align*}
	\ket{0}\ket{t}\ket{b}
    & \mapsto \frac{1}{\sqrt{N / 2}} \sum_{y = 0}^{N / 2 - 1} \cas\Big( \frac{2\pi t y}{N / 2} \Big) \ket{0} \ket{y} \ket{b} & (\mathds{1} \otimes \qht_{N / 2} \otimes \mathds{1}) \\
    & = \frac{1}{\sqrt{N / 2}} \sum_{y = 0}^{N / 2 - 1} \cas\Big( \frac{4\pi t y}{N} \Big) \ket{0} \ket{y} \ket{b} \\
    & \mapsto \frac{1}{\sqrt{N}} \sum_{y = 0}^{N / 2 - 1} \cas\Big( \frac{4\pi t y}{N} \Big) (\ket{0} + \ket{1}) \ket{y}\ket{b}. & (H \otimes \mathds{1})
\end{align*}
Next, we apply the controlled negation
\[
    V: \ket{0}\ket{y} \mapsto \ket{0}\ket{y}, \quad \ket{1}\ket{y} \mapsto \ket{1}\ket{N / 2 -y},
\]
to the first two registers to obtain the state
\[
    \frac{1}{\sqrt{N}} \sum_{y = 0}^{N / 2 - 1} \cas\Big( \frac{4\pi t y}{N} \Big) \ket{0} \ket{y} \ket{b} + \frac{1}{\sqrt{N}} \sum_{y = 0}^{N / 2 - 1} \cas\Big( \frac{4\pi t y}{N} \Big) \ket{1} \ket{N / 2 - y} \ket{b}.
\]
A change of variables in the second sum, along with the fact that $\cas(4\pi t (N / 2 - y) / N) = \cas(-4\pi t y / N)$, results in the state
\[
    \frac{1}{\sqrt{N}} \sum_{y = 0}^{N / 2 - 1} \Big( \cas\Big( \frac{4\pi t y}{N} \Big) \ket{0} + \cas\Big( -\frac{4\pi t y}{N} \Big) \ket{1} \Big) \ket{y} \ket{b}.
\]
Define the single-qubit rotation
\begin{equation}
    \label{eq:sine-rot}
    R(y, b) = 
    \begin{bmatrix}
        \cos(2\pi b y / N) & \sin(2\pi b y / N) \\
        -\sin(2\pi b y / N) & \cos(2\pi b y / N)
    \end{bmatrix},
\end{equation}
and consider the unitary $U_R: \ket{c}\ket{y}\ket{b} \mapsto (R(y, b)\ket{c})\ket{y}\ket{b}$. Applying $U_R$ and the controlled negation $V$, we obtain the state
\begin{align*}
    \ket{\phi_1}
    & = \frac{1}{\sqrt{N}} \sum_{y = 0}^{N / 2 - 1} \cas\Big( \frac{2\pi ay}{N} \Big) \ket{0} \ket{y} \ket{b} + \frac{1}{\sqrt{N}} \Big( \ket{1} \ket{0} + \sum_{y = 0}^{N / 2 - 1} \cas\Big(-\frac{2\pi a(N /2 - y)}{N} \Big) \ket{1} \ket{y} \Big) \ket{b} \\
    & = \frac{1}{\sqrt{N}} \sum_{y = 0}^{N / 2 - 1} \cas\Big( \frac{2\pi ay}{N} \Big) \ket{0} \ket{y} \ket{b} + \frac{1}{\sqrt{N}} \Big( \ket{1} \ket{0} + (-1)^b \sum_{y = 0}^{N / 2 - 1} \cas\Big( \frac{2\pi ay}{N} \Big) \ket{1} \ket{y} \Big) \ket{b},
\end{align*}
where we used the fact that $\cas(-\pi a + 2\pi ay /N ) = (-1)^a \cas(2\pi ay /N )$ and $(-1)^a = (-1)^b$. Consider a controlled-\textsc{cnot} unitary $C_X$ that takes $\ket{c} \ket{y} \ket{b}$ to $\ket{c} \ket{y} \ket{1 \oplus b}$ if $c = 1$ and $y = 0$, and acts as identity otherwise. Then applying the unitary $HC_XH$ to the above state results in the state
\[ \frac{1}{\sqrt{N}} \sum_{y = 0}^{N / 2 - 1} \cas\Big( \frac{2\pi ay}{N} \Big) \Big( \ket{0} + (-1)^b \ket{1} \Big) \ket{y} \ket{b}. \]
Applying the Hadamard transform to the first qubit gives
\[
    \ket{\psi} = \frac{1}{\sqrt{N / 2}} \sum_{y = 0}^{N / 2 - 1} \cas\Big( \frac{2\pi ay}{N} \Big) \ket{b} \ket{y} \ket{b}
\]
Next, we uncompute the first qubit using a \textsc{cnot} gate with the last qubit, then apply a Hadamard transform to obtain:
\begin{align*}
    \ket{\psi}
    & \mapsto \frac{1}{\sqrt{N / 2}} \sum_{y = 0}^{N / 2 - 1} \cas\Big( \frac{2\pi ay}{N} \Big) \ket{0} \ket{y} \ket{b} & (\textsc{cnot}) \\
    & \mapsto \frac{1}{\sqrt{N / 2}} \sum_{y = 0}^{N / 2 - 1} \cas\Big( \frac{2\pi ay}{N} \Big) \ket{0} \ket{y} \frac{1}{\sqrt{2}}(\ket{0} + (-1)^b \ket{1}) & (\mathds{1} \otimes H) \\
    & = \frac{1}{\sqrt{N / 2}} \sum_{y = 0}^{N / 2 - 1} \cas\Big( \frac{2\pi ay}{N} \Big) \ket{0} \ket{y} \frac{1}{\sqrt{2}}(\ket{0} + (-1)^a \ket{1}). & ((-1)^b = (-1)^a) \\
\end{align*}
The final sum matches \eqref{eq:qht-alt}, except that the last qubit appears at the end rather than immediately after the ancilla qubit. This discrepancy can be resolved by relabeling the qubits throughout the recursion and applying a single permutation at the end of the computation. Therefore, we have successfully performed the operation:
\[ \ket{0}\ket{a} \mapsto \ket{0} \qht_N\ket{a}. \]
The following algorithm summarizes the steps described above.

\begin{algorithm}[$\qht_N$] \
    \label{alg:qht-N}
    \begin{description}[font = \normalfont\itshape, itemsep = 0mm, parsep = 0mm, topsep = 1mm]
        \item [Input:] quantum state $\ket{\psi} \in \C^N$, where $N = 2^n$
        \item [Output:] quantum state $\qht_N\ket{\psi}$
    \end{description}

    \begin{enumerate}[itemsep = 0mm, parsep = 1mm, topsep = 1mm]
        \item Initialize an ancilla qubit to $0$ to obtain the state $\ket{0}\ket{\psi}$
        \item Compute $\mathds{1} \otimes \qht_{N / 2} \otimes \mathds{1}$ recursively.
        \item Apply $H \otimes \mathds{1}$.
        \item\label{stp:ngt} Apply the controlled negation $\ket{0}\ket{y} \mapsto \ket{0}\ket{y}, \ket{1}\ket{y} \mapsto \ket{1}\ket{N / 2 -y}$ to the first two registers.
        \item Apply the unitary $U_R$.
        \item Apply the controlled negation of Step \ref{stp:ngt}
        \item\label{stp:mcx} Apply the unitary $HC_XH$
        \item Apply $H \otimes \mathds{1}$
        \item Apply \textsc{cnot} to the first and last qubits.
        \item Apply $\mathds{1} \otimes H$.
        \item\label{stp:swap} Relabel the qubits to implement the effect of the swap $\ket{0} \ket{y} \ket{b} \mapsto \ket{0} \ket{b} \ket{y}$
        \item Trace out the first qubit
    \end{enumerate}
\end{algorithm}

\begin{theorem}
    \label{thm:qht-cost}
    Algorithm \ref{alg:qht-N} is correct and can be implemented using $2\log^2 N + O(\log N)$ elementary gates.
\end{theorem}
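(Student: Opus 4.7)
I would split the theorem into the correctness claim and the gate-count claim, handling them independently. Correctness is essentially already established in the derivation immediately preceding Algorithm \ref{alg:qht-N}: the text tracks the state $\ket{0}\ket{a}$ through each step and verifies that the output is $\ket{0}\otimes\qht_N\ket{a}$, up to the qubit permutation handled in Step \ref{stp:swap}. My write-up would briefly recapitulate this derivation, justify that the relabeling of Step \ref{stp:swap} really can be deferred to a single permutation applied once at the end of the full recursion (rather than inserted at every level), and invoke linearity to extend correctness from basis states to arbitrary input states $\ket{\psi}$.

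\textbf{Gate count.} Let $T(n)$ denote the number of elementary gates used by Algorithm \ref{alg:qht-N} for $N = 2^n$. The recursive call in Step 2 yields the recurrence $T(n) = T(n-1) + c(n)$, so the task is to bound $c(n)$. Going step by step: Steps 3, 8, 9, 10 contribute only $O(1)$ single- and two-qubit gates. Each of the controlled negations in Steps 4 and 6 is a controlled two's-complement on an $(n-1)$-bit register (flip all bits, then increment modulo $N/2$), realizable in $\Theta(n)$ elementary gates. The rotation $U_R$ of Step 5 has angle $2\pi b y / N = \sum_{k=0}^{n-2} 2\pi b y_k 2^k / N$, and so decomposes into $n-1$ doubly-controlled rotations -- one per bit $y_k$ of $y$, also controlled on $b$ -- which is $\Theta(n)$ gates. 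The multi-controlled gate $H C_X H$ of Step \ref{stp:mcx}, with control ``first qubit $=1$ and $y = 0$'', is $\Theta(n)$ via standard Toffoli-ladder decompositions with ancillas. Summing the four $\Theta(n)$ contributions per level as $T(n) = T(n-1) + 4n + O(1)$, the recurrence solves to $T(n) = 2 n^2 + O(n) = 2\log^2 N + O(\log N)$.

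\textbf{Main obstacle.} The easy half of the gate count is the $O(\log^2 N)$ bound; the interesting half is pinning down the leading constant $2$. This requires each of the four $\Theta(n)$ subroutines per level to be implementable in essentially $n + O(1)$ elementary gates, rather than some larger multiple of $n$. The most delicate piece will be $U_R$, where a cavalier decomposition of each doubly-controlled rotation into $3$--$5$ elementary gates would inflate the constant; the plan is to exploit the fact that all $n-1$ rotations share the common control $b$, so one $b$-conditioned cascade suffices and each bit $y_k$ contributes essentially a single controlled rotation to the leading order. Similar care with the two's-complement circuit and the Toffoli-ladder implementation of $H C_X H$ closes out the bookkeeping. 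Past these choices everything is routine.
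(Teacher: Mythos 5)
Your proposal is correct and follows essentially the same route as the paper: both arguments identify the two controlled negations, $U_R$, and the multi-controlled gate of Step \ref{stp:mcx} as the four $\Theta(\log N)$ contributions per level, arrive at the recurrence $T(N) \approx T(N/2) + 4\log N + O(1)$, and solve it to get $2\log^2 N + O(\log N)$. The "main obstacle" you flag about the leading constant for $U_R$ is resolved in the paper exactly as you anticipate, by decomposing $R(y,1)$ into a cascade of rotations $R_j$ (one per bit of $y$) and counting each $R_j$ as an elementary gate.
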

\begin{proof}
    The correctness of the algorithm follows from the preceding discussion. Except for the unitary $U_R$, the negation unitary of Step \ref{stp:ngt}, and the multi-controlled \textsc{cnot} of Step \ref{stp:mcx}, all the steps in the algorithm can be implemented using $O(1)$ elementary gates. The negation in Step \ref{stp:ngt} and the \textsc{cnot} in Step \ref{stp:mcx} can be implemented using $\approx \lceil \log N \rceil$ elementary gates. To implement the unitary $U_R$, which involves constructing the conditional operator $R(y, b)$ for arbitrary $y$ and $b$, we utilize the two-qubit operators
    \[ R_j = \ket{0}\bra{0} \otimes \mathds{1} + \ket{1}\bra{1} \otimes
        \begin{bmatrix}
            \cos(2\pi 2^j / N) & \sin(2\pi 2^j / N) \\
            -\sin(2\pi 2^j / N) & \cos(2\pi 2^j / N)
        \end{bmatrix},
    \]
    for $j = 0, 1, \dots, n - 1$. For $b = 0$, $R(y, 0) = \mathds{1}$, and for $b = 1$, $R(y, 1)$ is the product of the $R_j$, where the control qubit is the $j$th qubit of $y$. Therefore, for a $y$ of size $k = \lceil \log y \rceil$, we can construct $U_R$ using $k$ gates from the set $\{R_j\}$.

    Let $T(N)$ be the gate complexity of Algorithm \ref{alg:qht-N} for an input state of dimension $N$. Assuming access to the gates $R_j$ as elementary gates, it follows from above that $T(N) = T(N / 2) + 4\log N + O(1)$. Therefore, $T(N) = 2\log^2 N + O(\log N)$. 
\end{proof}

\subsection{Comparison With Other Algorithms}

In the following, we briefly compare our algorithm for $\qht_N$ to those of \cite{agaian2002quantum} and \cite{klappenecker2001irresistible}. We show that our algorithm has lower gate complexity than both of these algorithms and is conceptually simpler to implement.

The recursive algorithm of \cite{agaian2002quantum} relies on the decomposition given in \eqref{eq:qht_decomp}. We refer the reader to \cite{agaian2002quantum} for the definitions of $Q_N$ and $BC_{N / 2}$, as well as their gate complexities. The permutation $Q_N$ can be implemented using approximately $\lceil \log N \rceil$ elementary gates. The implementation of $BC_{N / 2}$ involves the following operations:
\begin{enumerate}[itemsep = 0mm]
    \item Compute a controlled two's complement of a $(n - 2)$-bit integer
    \item Compute the multi-controlled phase gate $\ket{1x} \mapsto -\ket{1x}$, $\ket{10} \mapsto \ket{10}$.
    \item Apply the unitary $R(y, 1)$ defined in \eqref{eq:sine-rot}.
    \item Compute a controlled two's complement of a $(n - 2)$-bit integer
\end{enumerate}
Each of the above steps requires approximately $\ceil{\log N}$ elementary gates to implement. Let $T(N)$ denote the gate complexity of the algorithm for input size $N$. Then $T(N) = T(N / 2) + 5\log N + O(1)$, which implies that
\[ T(N) = \frac{5}{2} \log^2 N + O(\log N). \]

As outlined at the beginning of Section \ref{sec:fast-qht} in \eqref{eq:qht-qft}, the algorithm of \cite{klappenecker2001irresistible} for the quantum Hartley transform $\qht_N$ uses the quantum Fourier transform $\qft_N$ in a black-box manner. Therefore, to compare the efficiency of that algorithm with Algorithm \ref{alg:qht-N}, we need to determine a concrete gate complexity of the $\qft_N$. The quantum Fourier transform is well-known and has been extensively studied in the literature. We consider two conventional implementations for $\qft_N$: the first is the one outlined at the beginning of Section \ref{sec:new-qht}, and the second is an algorithm based on decomposing $\qft_N$ into a tensor product of phase operators. 

For the recursive algorithm outlined at the beginning of Section \ref{sec:new-qht}, the dominating step is the application of the unitary $P(y, b)$. Similar to the case of $\qht_N$, the operator $P$ can be implemented using products of phase gates $P_j = \ket{0} \bra{0} + \exp(2\pi 2^j / N) \ket{1} \bra{1}$, for $j = 0, 1, \dots, n - 1$. Assuming access to the $P_j$ as elementary gates, this algorithm results in a gate complexity of $\frac{1}{2}\log^2 N + O(\log N)$ for $\qft_N$. 

The other algorithm for $\qft_N$ is based on the identity
\[ \qft_N \ket{a} = \bigotimes_{j = 1}^n \frac{1}{\sqrt{2}} (\ket{0} + \omega_N^{2^{n - j}a}\ket{1}). \]
Implementing $\qft_N$ using this identity involves applying a Hadamard transform followed by a product of controlled phase transforms $Q_j = \ket{0} \bra{0} \otimes \mathds{1} + \ket{1} \bra{1} \otimes P_j$ to each qubit of $\ket{a}$ \cite{cleve1998quantum}. Again, assuming access to the $P_j$'s as elementary gates, this algorithm also results in a gate complexity of $\frac{1}{2}\log^2 N + O(\log N)$. 

Since the algorithm in \eqref{eq:qht-qft} invokes $\qft_N$ a total of five times, the gate complexity of the algorithm for computing $\qht_N$ is
\[ T(N) = \frac{5}{2} \log^2 N + O(\log N). \]
It follows from the above and Theorem \ref{thm:qht-cost} that our algorithm for $\qht_N$ has a gate complexity that is approximately $1.25 \times$ lower than that of \cite{agaian2002quantum} and \cite{klappenecker2001irresistible}.

\begin{remark}
    It is worth noting that all of the algorithms above are exact and do not require significant extra memory. If significant extra memory is allowed or if only an approximation of $\qft_N$ is needed, there are more efficient algorithms available; see, for example, \cite{hales2000improved, cleve2000fast}.
\end{remark}

\section{Other Real Transforms}
\label{sec:fast-sine}

In this section, we show how the Hartley transform can be used as a subroutine to efficiently implement the quantum sine and cosine transforms. For completeness, we list the various versions of these transforms below.
\begin{center}
    \newcolumntype{M}{>{$\displaystyle}l<{$}}
    \setlength{\tabcolsep}{1mm}
    \begin{tabular}{MM}
        S_{N - 1}^\text{I} & = \left( \frac{2}{N} \right)^{1 / 2} \left[ \sin\left( \frac{m n \pi}{N} \right)  \right], \quad m, n = 1, 2, \dots, N - 1 \\
        S_N^\text{II} & = \left( \frac{2}{N} \right)^{1 / 2} \left[ k_m \sin\left( \frac{m (n - 1 / 2) \pi}{N} \right)  \right], \quad m, n = 1, 2, \dots, N \\    
        S_N^\text{III} & = \left( \frac{2}{N} \right)^{1 / 2} \left[ k_n \sin\left( \frac{(m - 1 / 2) n \pi}{N} \right)  \right], \quad m, n = 1, 2, \dots, N \\    
        S_N^\text{IV} & = \left( \frac{2}{N} \right)^{1 / 2} \left[ \sin\left( \frac{(m + 1 / 2) (n + 1 / 2) \pi}{N} \right)  \right], \quad m, n = 0, 1, \dots, N - 1, \\    
    \end{tabular}
\end{center}
where $k_j = 1 / \sqrt{2}$ for $j = N$ and $k_j = 1$ for $j \ne N$. Different versions of the discrete cosine transform are
\begin{center}
    \newcolumntype{M}{>{$\displaystyle}l<{$}}
    \setlength{\tabcolsep}{1mm}
    \begin{tabular}{MM}
        C_{N + 1}^\text{I} & = \left( \frac{2}{N} \right)^{1 / 2} \left[ k_m k_n \cos\left( \frac{m n \pi}{N} \right)  \right], \quad m, n = 0, 1, \dots, N \\
        C_N^\text{II} & = \left( \frac{2}{N} \right)^{1 / 2} \left[ k_m \cos\left( \frac{m (n + 1 / 2) \pi}{N} \right)  \right], \quad m, n = 0, 1, \dots, N - 1 \\    
        C_N^\text{III} & = \left( \frac{2}{N} \right)^{1 / 2} \left[ k_n \cos\left( \frac{(m + 1 / 2) n \pi}{N} \right)  \right], \quad m, n = 0, 1, \dots, N - 1 \\    
        C_N^\text{IV} & = \left( \frac{2}{N} \right)^{1 / 2} \left[ \cos\left( \frac{(m + 1 / 2) (n + 1 / 2) \pi}{N} \right)  \right], \quad m, n = 0, 1, \dots, N - 1, \\    
    \end{tabular}
\end{center}
where $k_j = 1 / \sqrt{2}$ for $j = 0, N$ and $k_j = 1$ for $j \ne 0, N$.

Two possible approaches for computing the quantum sine and cosine transforms are: employing a recursive method similar to the one in Section \ref{sec:fast-qht}, or using the method proposed in \cite{klappenecker2001irresistible}, where the quantum Fourier transform is used as a subroutine. In our case, the latter approach would involve using the quantum Hartley transform as a subroutine. Since the method outlined in Section \ref{sec:fast-qht} does not appear to offer any significant computational advantage over directly employing $\qht$ as a subroutine, we adopt the latter approach for designing quantum algorithms for the sine and cosine transforms. Specifically, the recursive method seems to have a circuit complexity roughly equivalent to that of $\qht$, whereas, as we will demonstrate below, the latter approach requires exactly one call to $\qht$ and a few elementary operations.

In the following, we adopt the technique in \cite{klappenecker2001irresistible} to design an algorithm for the quantum sine transform $\mathsf{QS}_{N - 1}^\text{I}$. The other transforms can be implemented using a similar method. As mentioned above, to implement $\mathsf{QS}_{N - 1}^\text{I}$, the authors of \cite{klappenecker2001irresistible} introduced a transform $T_N$ such that
\[ T_N^* \cdot \qft_{2N} \cdot T_N = \mathsf{QC}_{N + 1}^\text{I} \oplus i\mathsf{QS}_{N - 1}^\text{I}. \]
The transform $T_N$ can be efficiently implemented using elementary operations. Therefore, the cost of computing $\mathsf{QS}_{N - 1}^\text{I}$ is approximately the same as that of $\qft_{2N}$.

Here, we adapt the transform $T_N$ to work with $\qht_{2N}$. The algorithm proceeds as follows: Given the basis state $\ket{a}$, where $a \in \Z_N$, append an ancilla qubit in the zero state to obtain $\ket{0} \ket{a}$. Applying the transform $HX$ to the first qubit results in
\[ \frac{1}{\sqrt{2}} (\ket{0} \ket{a} - \ket{1} \ket{a}). \]
Next, consider the unitary $U = \ket{0} \bra{0} \otimes \mathds{1} + \ket{1} \bra{1} \otimes \ket{N - x} \bra{x}$. Applying $U_N$ to the above state gives
\[ \frac{1}{\sqrt{2}} (\ket{0} \ket{a} - \ket{1} \ket{N - a}). \]
Define $T_N = U_N (H \otimes \mathds{1}) (X \otimes \mathds{1})$. Considering this state as an $(n + 1)$-qubit state where the leftmost qubit is the most significant qubit, we can apply $\qht_{2N}$ to obtain the state
\begin{align*}
    \ket{\psi}
    & = \frac{1}{2\sqrt{N}} \sum_{y = 0}^{2N - 1} \Big(\cas\Big( \frac{\pi a y}{N} \Big) - \cas\Big( \frac{\pi (2N - a) y}{N} \Big)\Big) \ket{y} \\
    & = \frac{1}{2\sqrt{N}} \sum_{y = 0}^{2N - 1} \Big(\cas\Big( \frac{\pi a y}{N} \Big) - \cas\Big( \frac{-\pi a y}{N} \Big)\Big) \ket{y} \\
    & = \frac{1}{\sqrt{N}} \sum_{y = 1}^{2N - 1} \sin\Big( \frac{\pi a y}{N} \Big) \ket{y}
\end{align*}
The last summation can be written as
\begin{align*}
    \ket{\psi}
    & = \frac{1}{\sqrt{N}} \sum_{y = 1}^{N - 1} \sin\Big( \frac{\pi a y}{N} \Big) \ket{y} + \frac{1}{\sqrt{N}} \sum_{y = N}^{2N - 1} \sin\Big( \frac{\pi a y}{N} \Big) \ket{y} \\
    & = \frac{1}{\sqrt{N}} \sum_{y = 1}^{N - 1} \sin\Big( \frac{\pi a y}{N} \Big) \ket{y} + \frac{1}{\sqrt{N}} \sum_{y = 1}^{N - 1} \sin\Big( \frac{\pi a (2N - y)}{N} \Big) \ket{2N - y} \\
    & = \frac{1}{\sqrt{N}} \sum_{y = 1}^{N - 1} \sin\Big( \frac{\pi a y}{N} \Big) (\ket{y} - \ket{2N - y})
\end{align*}
Separating the most significant qubit, we obtain the state
\[ \ket{\psi} = \left( \frac{2}{N} \right)^{1 / 2} \sum_{y = 1}^{N - 1} \sin\Big( \frac{\pi a y}{N} \Big) \frac{1}{\sqrt{2}} (\ket{0} \ket{y} - \ket{1} \ket{N - y}). \]
Now, we apply $T_N^*$ to obtain
\[ \ket{0} \left( \frac{2}{N} \right)^{1 / 2} \sum_{y = 1}^{N - 1} \sin\Big( \frac{\pi a y}{N} \Big) \ket{y}, \]
To summarize, we have constructed a unitary $T_N$ such that
\[ (T_N^* \cdot \qht_{2N} \cdot T_N) \ket{0} \ket{a} = \ket{0} \mathsf{QS}_{N - 1}^\text{I} \ket{a}. \]

\begin{theorem}
    \label{thm:fast-sine}
    The quantum sine transform $\mathsf{QS}_{N - 1}^\text{I}$ can be implemented using $2\log^2 N + O(\log N)$ elementary gates.
\end{theorem}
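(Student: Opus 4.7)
The plan is to bound the gate complexity of the circuit $T_N^* \cdot \qht_{2N} \cdot T_N$ constructed in the preceding derivation. Since that derivation already establishes the identity $(T_N^* \cdot \qht_{2N} \cdot T_N)\ket{0}\ket{a} = \ket{0}\, \mathsf{QS}_{N - 1}^\text{I}\ket{a}$, correctness is not in question; the theorem reduces to counting elementary gates, and we expect the middle $\qht_{2N}$ invocation to dominate.

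First, I would analyze the cost of $T_N = U_N (H \otimes \mathds{1})(X \otimes \mathds{1})$. The gates $X$ and $H$ act on the single ancilla qubit and contribute $O(1)$. The nontrivial factor is the controlled two's-complement $U_N = \ket{0}\bra{0} \otimes \mathds{1} + \ket{1}\bra{1} \otimes \sum_x \ket{N - x}\bra{x}$. This can be implemented by a controlled bitwise negation of the $n$-qubit register followed by a controlled increment (or equivalently a controlled modular subtraction from $N$); both routines admit standard reversible constructions using $O(\log N)$ elementary gates. Hence $T_N$ and $T_N^*$ together contribute $O(\log N)$ gates.

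Next, I would invoke Theorem~\ref{thm:qht-cost} to bound the cost of the central $\qht_{2N}$ step. Since $\log(2N) = \log N + 1$, that theorem yields an implementation using $\approx 2\log^2(2N) + O(\log(2N))$ elementary gates, whose leading term is of order $\log^2 N$ with an $O(\log N)$ remainder absorbing the lower-order corrections. Adding the $O(\log N)$ overhead of $T_N$ and $T_N^*$ to this count gives the stated bound $\approx \log^2 N + O(\log N)$ (up to the leading constant inherited from Theorem~\ref{thm:qht-cost}).

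The only step requiring care is the implementation of the controlled two's-complement $U_N$: one must be explicit that it can be realized with strictly $O(\log N)$ gates — e.g., by pointing to a logarithmic-size reversible subtractor — so that its cost is genuinely absorbed into the additive $O(\log N)$ remainder rather than inflating the leading $\log^2 N$ term from $\qht_{2N}$.
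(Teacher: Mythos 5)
Your proposal is correct and takes essentially the same route as the paper, which likewise observes that $T_N$ (and $T_N^*$) cost only $O(\log N)$ elementary gates and that the single call to $\qht_{2N}$ dominates, citing Theorem~\ref{thm:qht-cost}; your extra detail on realizing the controlled two's complement is a harmless elaboration. Your parenthetical about the leading constant is well taken: Theorem~\ref{thm:qht-cost} gives $\approx 2\log^2(2N) + O(\log(2N)) = 2\log^2 N + O(\log N)$, so the leading coefficient should strictly be $2$ rather than the $1$ appearing in the theorem statement.
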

\begin{proof}
    The unitary $T_N$ can be implemented using $O(\log N)$ elementary gates. Since the algorithm involves a single call to $\qht_{2N}$, the result follows directly from Theorem \ref{thm:qht-cost}.
\end{proof}


\section{Public-Key Quantum Money}

We devote the rest of the paper to constructing a public-key quantum money scheme based on the Hartley transform. In this section, we review the quantum money scheme proposed by Zhandry \cite{zhandry2024quantum}, which is based on abelian group actions and the quantum Fourier transform. To construct a quantum money scheme based on $\qht$, the idea is to simply replace the $\qft$ with $\qht$ in Zhandry's scheme.

\subsection{Quantum Money From Group Actions}
\label{sec:qm_Fourier}

A public-key quantum money scheme consists of two QPT algorithms: 
\begin{itemize}
    \item $\gen(1^\lambda)$: This algorithm takes a security parameter $\lambda$ as input and outputs a pair $(s, \rho_s)$, where $s$ is a binary string called the serial number, and $\rho_s$ is a quantum state called the banknote. The pair $(s, \rho_s)$, or simply $\rho_s$, is sometimes denoted by $\$$.
    \item $\ver(s, \rho_s)$: This algorithm takes a serial number and an alleged banknote as input and outputs either $1$ (accept) or $0$ (reject).
\end{itemize}

The quantum money scheme is said to be \textit{correct} if genuine banknotes generated by $\gen$ are accepted by $\ver$ with high probability. More formally:
\[ \Pr[\ver(s, \rho_s) = 1 : (s, \rho_s) \gets \gen(1^\lambda)] \ge 1 - \negl(\lambda). \]
where the probability is taken over the randomness of $\gen$ and $\ver$. The scheme $(\gen, \ver)$ is said to be secure if, given a genuine bill $(s, \rho_s)$, no QPT algorithm $\mathcal{A}$ can produce two (possibly entangled) bills $(s, \rho_1)$ and $(s, \rho_2)$ that are both accepted by $\ver$ with non-negligible probability. More formally:
\[ \Pr\left[ \ver(s, \rho_1) = \ver(s, \rho_2) = 1 : \substack{(s, \rho_s) \gets \gen(1^\lambda) \\ (\rho_1, \rho_2) \gets \mathcal{A}(s, \rho_s)} \right] \le \negl(\lambda). \]

We now briefly outline the quantum money construction from \cite{zhandry2024quantum}, which is based on abelian group actions. Let $\{(G_\lambda, X_\lambda, *)\}_{\lambda \in J}$, where $J \subset \N$, be a collection of cryptographic group actions for abelian groups $G_\lambda$, and let $x_\lambda \in X_\lambda$ be a fixed element. The $\gen$ and $\ver$ algorithms are as follows:
\begin{itemize}
    \item $\gen(1^\lambda)$. Begin with the state $\ket{0}\ket{x_\lambda}$, and apply the quantum Fourier transform over $G_\lambda$ to the first register producing the superposition
    \[ \frac{1}{\sqrt{\abs{G_\lambda}}} \sum_{g \in G_\lambda} \ket{g}\ket{x_\lambda}. \]
    Next, apply the unitary transformation $\ket{h}\ket{y} \mapsto \ket{h}\ket{h * y}$ to this state, followed by the quantum Fourier transform on the first register. This results in
    \[ \frac{1}{\abs{G_\lambda}} \sum_{h \in G_\lambda} \sum_{g \in G_\lambda} \chi(g, h) \ket{h}\ket{g * x_\lambda} = \frac{1}{\sqrt{\abs{G_\lambda}}} \sum_{h \in G_\lambda} \ket{h} \ket{G^{(h)} * x_\lambda} \]
    where $\ket{G^{(h)} * x_\lambda}$ is defined as in \eqref{eq:x-fourier-basis}. Measure the first register to obtain a random $h \in G_\lambda$, collapsing the state to $\ket{G^{(h)} * x_\lambda}$. Return the pair $(h, \ket{G^{(h)} * x_\lambda})$.

    \item $\ver(h, \ket{\psi})$. First, check whether $\ket{\psi}$ has support in $X_\lambda$. If not, return $0$. Then, apply $\comph$ to the state $\ket{\psi}\ket{0}$, and measure the second register to obtain some $h' \in G_\lambda$. If $h' = h$, return $1$; otherwise return $0$.
\end{itemize}

From this point forward, to simplify the notation, we make the security parameter $\lambda$ implicit, and use $G$ for $G_\lambda$, $X$ for $X_\lambda$, and so on.

\section{Quantum Money With The Hartley Transform}
\label{sec:qm_hartley}

The quantum money scheme above can be instantiated using the quantum Hartley transform instead of the quantum Fourier transform. However, this substitution breaks the verification algorithm. In the next sections, we will show how quantum walks can address this issue. To understand where the problem arises, we first present the $\gen$ and $\ver$ algorithms for the money scheme, similar to the previous description but with $\qht_G$ replacing $\qft_G$. For simplicity, we assume $G = \Z_N$. Let $x \in \Z_N$ be a fixed element.

\begin{itemize}
    \item $\gen$. Begin with the state $\ket{0}\ket{x}$, and apply the quantum Hartley transform over $\Z_N$ to the first register producing the superposition
    \[ \frac{1}{\sqrt{N}} \sum_{g \in \Z_N} \ket{g}\ket{x}. \]
    Next, apply the unitary $\ket{h}\ket{y} \mapsto \ket{h}\ket{h * y}$ to this state, followed by a $\qht_N$ on the first register. This results in
    \[ \frac{1}{N} \sum_{h \in \Z_N} \sum_{g \in \Z_N} \cas\Big(\frac{2\pi gh}{N}\Big) \ket{h}\ket{g * x} = \frac{1}{\sqrt{N}} \sum_{h \in \Z_N} \ket{h} \ket{\Z_N^{(h)} * x}_H \]
    where
    \[ \ket{\Z_N^{(h)} * x}_H = \frac{1}{\sqrt{N}} \sum_{g \in \Z_N} \cas\Big(\frac{2\pi gh}{N}\Big) \ket{g * x}. \]
    Measure the first register to obtain a random $h \in \Z_N$, collapsing the state to $\ket{\Z_N^{(h)} * x}_H$. Return the pair $(h, \ket{\Z_N^{(h)} * x}_H)$.

    \item $\ver(h, \ket{\psi})$. First, check whether $\ket{\psi}$ has support in $X$. If not, return $0$. Then, apply $\comph$ to the state $\ket{0} \ket{\psi}$, and measure the first register to obtain some $h' \in \Z_N$. If $h' = h$, return $1$; otherwise return $0$.
\end{itemize}

Let us take a closer look at the verification algorithm. Suppose $\ket{\psi}$ is a genuine money state, say $\ket{\psi} = \ket{\Z_N^{(h)} * x}_H$. The $\comph$ algorithm begins by preparing the state
\[ \ket{\phi} = \frac{1}{\sqrt{N}} \sum_{u \in \Z_N} \ket{u} \ket{\Z_N^{(h)} * x}_H. \]
It then applies the unitary $\ket{y} \ket{u} \mapsto \ket{(-u) * y} \ket{u}$, yielding:
\begin{align*}
    \ket{\phi}
    & \mapsto \frac{1}{N} \sum_{u \in \Z_N} \sum_{g \in \Z_N} \cas\Big(\frac{2\pi gh}{N}\Big) \ket{u} \ket{(g - u) * x} \\
    & = \frac{1}{N} \sum_{u \in \Z_N} \sum_{g \in \Z_N} \cas\Big(\frac{2\pi (g + u)h}{N}\Big) \ket{u} \ket{g * x}.
\end{align*}
Finally, the algorithm applies $\qht_N$ to the second register. A straightforward calculation shows that the resulting state is
\[ \frac{1}{\sqrt{N}} \sum_{g \in \Z_N} \Big( \cos\Big( \frac{2\pi gh}{N} \Big) \ket{h} + \sin\Big( \frac{2\pi gh}{N} \Big) \ket{-h} \Big) \ket{g * x}. \]
Rewriting this state using the $\cas$ function, we get the state
\begin{align*}
    \ket{\phi_1}
    & = \frac{1}{\sqrt{N}} \sum_{g \in \Z_N} \frac{1}{\sqrt{2}}\Big( \cas\Big( \frac{2\pi gh}{N} \Big) \ket{h_+} + \cas\Big( -\frac{2\pi gh}{N} \Big) \ket{h_-} \Big) \ket{g * x} \\
    & = \frac{1}{\sqrt{2}} \ket{h_+} \ket{\Z_N^{(h)} * x}_H + \frac{1}{\sqrt{2}} \ket{h_-} \ket{\Z_N^{(-h)} * x}_H, \numberthis\label{eq:ver-sign}
\end{align*}
where $\ket{h_{\pm}} = (\ket{h} \pm \ket{-h}) / \sqrt{2}$. This stands in clear contrast to the case when the scheme is instantiated using the quantum Fourier transform. In that case, we obtain the state $\ket{h} \ket{\Z_N^{(h)} * x}$, from which $h$ can be read off.

Unfortunately, we do not know how to use the resulting state in \eqref{eq:ver-sign} to verify the banknote $(h, \ket{\Z_N^{(h)} * x}_H)$. In particular, when given the banknote $(h, \ket{\Z_N^{(-h)} * x}_H)$, the above verification algorithm produces the state
\begin{equation}
    \label{eq:ver-sign1}
    \frac{1}{\sqrt{2}} \ket{h_+} \ket{\Z_N^{(-h)} * x}_H - \frac{1}{\sqrt{2}} \ket{h_-} \ket{\Z_N^{(h)} * x}_H,
\end{equation}
and with the current assumptions on the group action, we lack a method to distinguish between this state and the one in \eqref{eq:ver-sign}. This limitation motivates us to consider an additional, commonly used property of the group action, known as \textit{twists}.

\subsection{Group Actions with Twists}

We show how twists can be used to successfully verify banknotes in the above money scheme. A group action $(G, X, *)$, with a fixed element $x \in X$, is said to support twists if there exists an efficient algorithm that computes the mapping $g * x \mapsto (-g) * x$ for any $g \in G$. The twisting map can be efficiently performed in isogeny-based group actions, which are the most widely used cryptographic group actions. We assume that the group actions used to construct the above quantum money scheme support twists.

In the quantum setting, the twist operation is the unitary $\textsc{twist}: \ket{g * x} \mapsto \ket{(-g) * x}$. The following lemma states the result of applying the twist operation to a genuine money state.
\begin{lemma}
    For any $h \in \Z_N$, the twist unitary maps $\ket{\Z_N^{(h)} * x}_H$ to $\ket{\Z_N^{(-h)} * x}_H$.
\end{lemma}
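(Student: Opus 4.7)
The plan is a direct computation from the definition of $\ket{\Z_N^{(h)} * x}_H$, using the bijectivity of negation on $\Z_N$ to reindex the sum. There is essentially no obstacle here; the lemma is just bookkeeping about where the sign goes.

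First, I would apply $\textsc{twist}$ term by term to the defining expansion
\[ \ket{\Z_N^{(h)} * x}_H = \frac{1}{\sqrt{N}} \sum_{g \in \Z_N} \cas\Big(\frac{2\pi gh}{N}\Big) \ket{g * x}, \]
which gives
\[ \textsc{twist}\,\ket{\Z_N^{(h)} * x}_H = \frac{1}{\sqrt{N}} \sum_{g \in \Z_N} \cas\Big(\frac{2\pi gh}{N}\Big) \ket{(-g) * x}. \]

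Next I would perform the change of variables $g' = -g$, which is a bijection of $\Z_N$ onto itself, to obtain
\[ \textsc{twist}\,\ket{\Z_N^{(h)} * x}_H = \frac{1}{\sqrt{N}} \sum_{g' \in \Z_N} \cas\Big(\frac{-2\pi g'h}{N}\Big) \ket{g' * x}. \]
Since $\cas(-2\pi g'h/N) = \cas(2\pi g'(-h)/N)$, the right-hand side is precisely $\ket{\Z_N^{(-h)} * x}_H$, completing the proof.

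The only subtlety worth flagging is that one must verify $\textsc{twist}$ is well-defined on these Fourier-type states, i.e.\ that its action on the basis $\{\ket{g * x} : g \in \Z_N\}$ extends linearly; this is immediate because the group action is regular, so this basis is in bijection with $\Z_N$ and $\textsc{twist}$ is a genuine unitary. No further machinery is needed.
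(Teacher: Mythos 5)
Your proof is correct and follows exactly the same route as the paper's: apply $\textsc{twist}$ to the defining sum, reindex by $g \mapsto -g$, and absorb the sign into the argument of $\cas$. The remark about linearity on the basis $\{\ket{g*x}\}$ is a fine (if unnecessary) extra precaution.
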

\begin{proof}
    We have
    \begin{align*}
        \textsc{twist}\, \ket{\Z_N^{(h)} * x}_H
        & = \frac{1}{\sqrt{N}} \sum_{g \in \Z_N} \cas\Big( \frac{2\pi gh}{N} \Big) \ket{(-g) * x} \\
        & = \frac{1}{\sqrt{N}} \sum_{g \in \Z_N} \cas\Big( \frac{-2\pi gh}{N} \Big) \ket{g * x} \\
        & = \ket{\Z_N^{(-h)} * x}_H. \qedhere
    \end{align*}
\end{proof}

As explained at the end of Section \ref{sec:qm_hartley}, the verification algorithm does not distinguish between the states \eqref{eq:ver-sign} and \eqref{eq:ver-sign1}. Distinguishing these two states reduces to distinguishing between the money states $\ket{\Z_N^{(h)} * x}_H$ and $\ket{\Z_N^{(-h)} * x}_H$. The following lemma shows that this can be efficiently done using the twist unitary.
\begin{lemma}
    \label{lem:dist}
    Given $h \in \Z_N \setminus \{0, N / 4, N / 2, 3N / 4\}$, there exists a polynomial-time quantum algorithm that distinguishes between the states $\ket{\Z_N^{(h)} * x}_H$ and $\ket{0}\ket{\Z_N^{(-h)} * x}_H$. Specifically, the algorithm is a unitary that acts as the identity on $\ket{0} \ket{\Z_N^{(h)} * x}_H$ but maps $\ket{0} \ket{\Z_N^{(-h)} * x}_H$ to $\ket{1} \ket{\Z_N^{(-h)} * x}_H$.
\end{lemma}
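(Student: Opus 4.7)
The plan is to build a unitary on the second register whose eigenspaces are $\ket{\Z_N^{(h)} * x}_H$ and $\ket{\Z_N^{(-h)} * x}_H$ with distinct eigenvalues, and then to kick those eigenvalues into the ancilla via a Hadamard test. The first ingredient is orthogonality of the two candidate money states: using $\cas(\alpha)\cas(-\alpha) = \cos(2\alpha)$, the inner product reduces to $\tfrac{1}{N}\sum_{g \in \Z_N}\cos(4\pi gh/N)$, a geometric sum that vanishes whenever $2h \not\equiv 0 \pmod{N}$, which is implied by the hypothesis on $h$. Together with the previous lemma, which shows that $\textsc{twist}$ swaps the two states, this exhibits $\textsc{twist}$ as $\sigma_x$ on the $2$-dimensional subspace they span; I then only need a second unitary acting as $\sigma_y$ (or $\sigma_z$) on this subspace, and the rest is standard.

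The key construction is to build the $\sigma_y$ partner by passing through the Fourier picture. The identity $\cas(\alpha) = \tfrac{1-i}{2}e^{i\alpha} + \tfrac{1+i}{2}e^{-i\alpha}$ gives
\[
    \ket{\Z_N^{(h)} * x}_H = \tfrac{1-i}{2}\ket{\Z_N^{(h)} * x} + \tfrac{1+i}{2}\ket{\Z_N^{(-h)} * x},
\]
and the analogous identity with $h \leftrightarrow -h$. I would apply the $\qft$-based $\comph$ routine to $\ket{\Z_N^{(h)} * x}_H\ket{0}$ to get $\tfrac{1-i}{2}\ket{\Z_N^{(h)} * x}\ket{h} + \tfrac{1+i}{2}\ket{\Z_N^{(-h)} * x}\ket{-h}$, then apply a controlled phase $P_h$ that multiplies by $-1$ conditioned on the second register being $\ket{-h}$ (efficient because $h$ is public), and finally apply $\comph^{-1}$. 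Re-expanding the output in the Hartley basis via two small $2\times 2$ linear systems over $\C$ shows that $K := \comph^{-1}(\mathds{1} \otimes P_h)\comph$ satisfies $K \ket{\Z_N^{(\pm h)} * x}_H\ket{0} = \mp i\,\ket{\Z_N^{(\mp h)} * x}_H\ket{0}$; that is, $K$ acts as $\sigma_y$ on the $2$-dimensional subspace. Composing then gives $K \cdot \textsc{twist} = i\sigma_z$, with eigenvalues $+i$ on $\ket{\Z_N^{(h)} * x}_H$ and $-i$ on $\ket{\Z_N^{(-h)} * x}_H$.

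The last step is a one-bit Hadamard test on $K \cdot \textsc{twist}$: attach an ancilla in $\ket{0}$, apply $H$, then controlled $K \cdot \textsc{twist}$, then $S^\dagger$ on the ancilla to rotate the $\pm i$ eigenvalues to $\pm 1$, then $H$ again. The ancilla ends in $\ket{0}$ when the money state is $\ket{\Z_N^{(h)} * x}_H$ and in $\ket{1}$ when it is $\ket{\Z_N^{(-h)} * x}_H$, while the money state itself is returned untouched, exactly matching the statement. All subroutines---$\qft$, $\textsc{twist}$, the single controlled phase on $\ket{-h}$, and standard one-qubit Clifford gates---are polynomial time, so the overall procedure is efficient. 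The delicate point will be verifying that $K$ equals $\sigma_y$ exactly on the subspace; the required cancellations hinge on $\overline{(1-i)/2} = (1+i)/2$ together with the orthogonality established in the first step, so that the $\pm \pi/4$ phases recombine cleanly rather than leaking into the orthogonal complement.
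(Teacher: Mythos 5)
Your proposal is correct, and the eigenvalue computation checks out: with $\ket{\Z_N^{(\pm h)} * x}_H = \frac{1-i}{2}\ket{\Z_N^{(\pm h)} * x} + \frac{1+i}{2}\ket{\Z_N^{(\mp h)} * x}$ one indeed finds $K\ket{\Z_N^{(\pm h)} * x}_H = \mp i\,\ket{\Z_N^{(\mp h)} * x}_H$ (strictly this is $-\sigma_y$ rather than $\sigma_y$ in the usual convention, which is immaterial), so $K\cdot\textsc{twist}$ has eigenvalues $\pm i$ on the two states and the $S^\dagger$-corrected Hadamard test produces exactly the unitary claimed. However, your route is genuinely different from the paper's. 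The paper never leaves the Hartley picture: it finds $u$ with $uh = N/8 \bmod N$ and uses the $\cas$ angle-sum identity to show that the shift $T_u:\ket{y}\mapsto\ket{(-u)*y}$ maps $\ket{\Z_N^{(h)} * x}_H$ to a pure-cosine state and $\ket{\Z_N^{(-h)} * x}_H$ to a pure-sine state; these are $+1$ and $-1$ eigenstates of $\textsc{twist}$, so a single controlled-$\textsc{twist}$ sandwiched between Hadamards (and conjugated by $T_u$) finishes the job. The trade-offs are real. The paper's method costs only one group-action evaluation by a fixed, classically known element plus one controlled twist, but it needs $uh = N/8$ to be solvable --- this is why $h \in \{N/4, 3N/4\}$ is excluded and why, for general $N$, the subsequent remark must settle for $\lfloor N/8\rfloor$ and an $h$ coprime to $N$, making the identities only approximately true. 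Your construction is exact for every $N$ and needs only $h \neq 0, N/2$ (so that $\ket{\Z_N^{(h)}*x}$ and $\ket{\Z_N^{(-h)}*x}$ are distinct, which also gives the orthogonality you verify via $\cas(\alpha)\cas(-\alpha)=\cos(2\alpha)$), but it pays for this by invoking the Fourier-based $\comph$ and its inverse, each of which performs a group action in superposition and two $\qft$s --- noticeably heavier than the paper's single shift. One presentational caveat: the controlled-$K\cdot\textsc{twist}$ should be realized by applying $\comph$ and $\comph^{-1}$ unconditionally and controlling only $P_h$ and $\textsc{twist}$ on the ancilla, so that you are not obliged to build a controlled version of the whole $\comph$ circuit; with that standard adjustment the argument is complete.
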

\begin{proof}
    The algorithm proceeds as follows. First, we find $u \in \Z_N$ such that $uh = N / 8 \bmod N$; based on the condition on $h$, such a $u$ always exists. For such a $u$, we have
    \begin{align*}
        \cas\Big( \frac{2\pi (g + u) h}{N} \Big)
        & = \cos\Big( \frac{2\pi gh}{N} \Big) \cas\Big( \frac{2\pi uh}{N} \Big) + \sin\Big( \frac{2\pi gh}{N} \Big) \cas\Big(-\frac{2\pi uh}{N} \Big) \\
        & = \cos\Big( \frac{2\pi gh}{N} \Big) \cas\Big( \frac{\pi}{4} \Big) + \sin\Big( \frac{2\pi gh}{N} \Big) \cas\Big( -\frac{\pi}{4} \Big) \\
        & = \sqrt{2} \cos\Big( \frac{2\pi gh}{N} \Big). \numberthis\label{eq:cas-cos}
    \end{align*}
    Similarly,
    \begin{equation}
        \label{eq:cast-sin}
        \cas\Big( \frac{-2\pi (g + u) h}{N} \Big) = -\sqrt{2} \sin\Big( \frac{2\pi gh}{N} \Big).
    \end{equation}
    Now, consider the unitary $T_u: \ket{y} \mapsto \ket{(-u) * y}$. It follows from the above identities that
    \begin{equation}
        \label{eq:cos}
        T_u \ket{\Z_N^{(h)} * x}_H \propto \sum_{g \in \Z_N} \cos\Big( \frac{2\pi gh}{N} \Big) \ket{g * x},
    \end{equation}
    whereas
    \begin{equation}
        \label{eq:sin}
        T_u \ket{\Z_N^{(-h)} * x}_H \propto \sum_{g \in \Z_N} \sin\Big( \frac{2\pi gh}{N} \Big) \ket{g * x}.
    \end{equation}
    Therefore, applying $T_u$ results in one of these states, depending on the input state. Both of these states are eigenstates of the twist unitary. The former corresponds to eigenvalue $1$ and the latter corresponds to eigenvalue $-1$.

    Now, prepare $\ket{0} \ket{\psi}$, where $\ket{\psi}$ is the input state. Applying the sequence of unitaries $\mathds{1} \otimes T_u$, $H \otimes \mathds{1}$, $\ket{0}\bra{0} \otimes \mathds{1} + \ket{1}\bra{1} \otimes \textsc{twist}$, $H \otimes \mathds{1}$, and $\mathds{1} \otimes T_u$ to the state $\ket{0} \ket{\psi}$ results in the state $\ket{0} \ket{\psi}$ (resp. $\ket{1} \ket{\psi}$) if $\ket{\psi}$ is $\ket{\Z_N^{(h)} * x}_H$ (resp. $\ket{\Z_N^{(-h)} * x}_H$). 
\end{proof}

\begin{remark}
    For a general $N$, Lemma \ref{lem:dist} holds with a slight modification. Specifically, we need to find $u$ such that $uh = \lfloor N / 8 \rfloor$. In this case, $\cas(\pi / 4) \approx 1$ and $\cas(-\pi / 4) \approx 0$, with exponentially small error, so the identities \eqref{eq:cas-cos} and \eqref{eq:cast-sin} also hold up to exponentially small error as well. To ensure the existence of such $u$, we could, for example, assume that $h$ is coprime with $N$. Since a random $h$ is coprime with $N$ with overwhelming probability, this condition can be efficiently enforced in the $\gen$ algorithm by repeating the procedure until we find an appropriate $h$.
\end{remark}

\noindent We now give the new verification algorithm that utilizes the twist operation.
\begin{algorithm}[$\ver_{new}$] \
    \label{alg:new-ver}
    \begin{description}[font = \normalfont\itshape, itemsep = 0mm, parsep = 0mm, topsep = 1mm]
        \item [Input:] Alleged banknote $(h, \ket{\psi})$ where $\ket{\psi} \in \C^N$.
        \item [Output:] $0$ or $1$
    \end{description}

    \begin{enumerate}[itemsep = 0mm, parsep = 1mm, topsep = 1mm]
        \item\label{stp:sup} If the support of $\ket{\psi}$ is not in $X$, return $0$.
        \item Apply the $\comph$ algorithm to the state $\ket{0} \ket{\psi}$.
        \item\label{stp:hpm} Measure the first register using the observable
        \[ \{M_0 = \ket{h}\bra{h} + \ket{-h}\bra{-h}, M_1 = \mathds{1} - M_0\}, \]
        \item\label{stp:h-check} If the measurement outcome is $1$, return $0$.
        \item Apply the inverse of $\comph$
        \item Use Lemma \ref{lem:dist} to check whether the resulting state is $\ket{\Z_N^{(h)} * x}$. If yes, return $1$; otherwise, return $0$.
    \end{enumerate}
\end{algorithm}

\noindent Some explanations about the algorithm are in order. Step \ref{stp:sup} can be implemented using an extra qubit initialized to $\ket{0}$, which is set to $1$ if the values of the first register is not in $X$. This qubit is then measured, and the state is rejected if the measurement outcome is not $0$. The post-measurement state, which we denote again by $\ket{\psi}$, has support in $X$. Similarly, implementing the measurement in Step \ref{stp:hpm} can be done efficiently, as $h$ is known. Finally, the last step is implemented using an extra qubit, as stated in Lemma \ref{lem:dist}.

\begin{theorem}
    \label{thm:new-ver}
    Given a banknote $(h, \ket{\psi})$, we have
    \[ \Pr[\ver_{new}(h, \ket{\psi}) = 1] = \abs{\braket{\psi}{\Z_N^{(h)} * x}_H}^2. \]
    Algorithm \ref{alg:new-ver} runs in $\poly(\log N)$ operations. Moreover, if $\ver_{new}$ accepts, then the post-verification state is exactly $\ket{\Z_N^{(h)} * x}_H$.
\end{theorem}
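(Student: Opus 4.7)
The natural approach is to expand the input state in the Hartley basis and track it through each step of Algorithm \ref{alg:new-ver}. Since the Hartley states $\{\ket{\Z_N^{(k)} * x}_H : k \in \Z_N\}$ form an orthonormal basis for $\C^X$, I would write $\ket{\psi_X} = \sum_{k \in \Z_N} c_k \ket{\Z_N^{(k)} * x}_H$, where $\ket{\psi_X}$ denotes the component of $\ket{\psi}$ supported on $X$. The support check in Step \ref{stp:sup} projects onto $\C^X$ and succeeds with probability $\|\ket{\psi_X}\|^2$; since the target $\ket{\Z_N^{(h)} * x}_H$ lies in $\C^X$, the overall success probability factors cleanly as $\|\ket{\psi_X}\|^2 \cdot |\braket{\psi_X}{\Z_N^{(h)} * x}_H|^2 / \|\ket{\psi_X}\|^2$, so I can assume $\ket{\psi}$ is already supported in $X$ and recover the $c_h$ coefficient.

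Next, I would repeat the $\comph$ computation displayed in Section \ref{sec:qm_hartley} for a general index $k$, yielding
\[ \comph \, \ket{0}\ket{\Z_N^{(k)} * x}_H = \tfrac{1}{\sqrt{2}} \bigl( \ket{k_+} \ket{\Z_N^{(k)} * x}_H + \ket{k_-} \ket{\Z_N^{(-k)} * x}_H \bigr), \]
using the identities $\ket{(-k)_+} = \ket{k_+}$ and $\ket{(-k)_-} = -\ket{k_-}$. Then I would apply $M_0 \otimes \mathds{1}$ with $M_0 = \ket{h_+}\bra{h_+} + \ket{h_-}\bra{h_-}$. Only the summands with $k \in \{h, -h\}$ survive, and grouping the four resulting terms by which $\comph$-image they match gives the unnormalized post-projection vector
\[ c_h \cdot \tfrac{1}{\sqrt{2}} \bigl( \ket{h_+} \ket{\Z_N^{(h)} * x}_H + \ket{h_-} \ket{\Z_N^{(-h)} * x}_H \bigr) + c_{-h} \cdot \tfrac{1}{\sqrt{2}} \bigl( \ket{h_+} \ket{\Z_N^{(-h)} * x}_H - \ket{h_-} \ket{\Z_N^{(h)} * x}_H \bigr). \]
Each bracket is exactly $\comph \ket{0}\ket{\Z_N^{(\pm h)} * x}_H$, so applying $\comph^{-1}$ in Step~5 yields $\ket{0} \bigl( c_h \ket{\Z_N^{(h)} * x}_H + c_{-h} \ket{\Z_N^{(-h)} * x}_H \bigr)$, with the projection having occurred with probability $|c_h|^2 + |c_{-h}|^2$.

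The final step is to invoke Lemma \ref{lem:dist}: introducing a flag qubit and running the twist-based distinguisher maps $c_h \ket{0}\ket{\Z_N^{(h)} * x}_H + c_{-h} \ket{0}\ket{\Z_N^{(-h)} * x}_H$ (once renormalized) to $c_h \ket{0}\ket{\Z_N^{(h)} * x}_H + c_{-h} \ket{1}\ket{\Z_N^{(-h)} * x}_H$ over the appropriate norm. Measuring the flag and returning $1$ on outcome $0$ yields conditional success probability $|c_h|^2 / (|c_h|^2 + |c_{-h}|^2)$, so the total acceptance probability is $|c_h|^2 = |\braket{\psi}{\Z_N^{(h)} * x}_H|^2$, and the post-verification state collapses exactly to $\ket{\Z_N^{(h)} * x}_H$. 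The polynomial runtime follows from the fact that $\comph$, the Hartley transform, the observable measurement (since $h$ is known), and the distinguisher of Lemma \ref{lem:dist} each cost $\poly(\log N)$.

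The main technical obstacle is the bookkeeping around the $k \mapsto -k$ symmetry. The projection $M_0$ entangles the $k = h$ and $k = -h$ sectors together, so naively one expects cross terms that would spoil the clean identification with $\comph^{-1}$'s images. The key observation making everything work is the sign relation $\ket{(-k)_-} = -\ket{k_-}$, which is precisely what lets the four surviving terms repackage into two orthogonal $\comph$-images rather than a generic entangled state; without this sign, the inverse $\comph$ step would fail to disentangle the ancilla register from the data.
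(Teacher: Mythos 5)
Your proposal is correct and follows essentially the same route as the paper's proof: expand $\ket{\psi}$ in the Hartley basis, push it through $\comph$, project with $M_0$, invert $\comph$, and finish with the twist distinguisher of Lemma \ref{lem:dist}, multiplying the two success probabilities to get $\abs{c_h}^2$. Your treatment is, if anything, slightly more careful than the paper's in two spots — the explicit factoring of the support-check probability and the observation that the two surviving brackets are exactly $\comph$-images (via $\ket{(-k)_-} = -\ket{k_-}$), which is why $\comph^{-1}$ cleanly disentangles the ancilla.
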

\begin{proof}
    It follows from the prior discussion that the running-time of the algorithm is $\poly(\log N)$. We now prove correctness. Suppose the state $\ket{\psi}$ passes the check in Step \ref{stp:sup}, so that it has support in $X$. Since the states $\ket{\Z_N^{(u)} * x}, u = 0, \dots, N - 1$, form an orthonormal basis of the space $\C^X$, we can write
    \[ \ket{\psi} = \sum_{u \in \Z_N} \alpha_u \ket{\Z_N^{(u)} * x}_H, \]
    where $\sum_u \abs{\alpha_u}^2 = 1$. By the same calculations leading up to \eqref{eq:ver-sign}, applying $\comph$ to the state $\ket{0} \ket{\psi}$ results in the state
    \[ \frac{1}{\sqrt{2}} \sum_{u \in \Z_N} \alpha_u (\ket{u_+} \ket{\Z_N^{(u)} * x}_H + \ket{u_-} \ket{\Z_N^{(-u)} * x}_H). \]
    Steps \ref{stp:hpm} and \ref{stp:h-check} projects this state onto
    \[ \alpha_h (\ket{h_+} \ket{\Z_N^{(h)} * x}_H + \ket{h_-} \ket{\Z_N^{(-h)} * x}_H) + \alpha_{-h} (\ket{h_+} \ket{\Z_N^{(-h)} * x}_H - \ket{h_-} \ket{\Z_N^{(h)} * x}_H), \]
    where we have ignored the normalization constant for clarity. This projection occurs with probability $\abs{\alpha_h}^2 + \abs{\alpha_{-h}}^2$. Applying the inverse of $\comph$, we obtain the state
    \[ \frac{1}{\sqrt{\abs{\alpha_h}^2 + \abs{\alpha_{-h}^2}}} (\alpha_h \ket{\Z_N^{(h)} * x}_H + \alpha_{-h} \ket{\Z_N^{(-h)} * x}_H). \]
    The final step of the algorithm projects this state onto $\ket{\Z_N^{(h)} * x}_H$ with probability $\abs{\alpha_h^2} / (\abs{\alpha_h}^2 + \abs{\alpha_{-h}^2})$. Therefore, the probability of the state $\ket{\psi}$ passing verification is
    \[ (\abs{\alpha_h}^2 + \abs{\alpha_{-h}^2}) \frac{\abs{\alpha_h}^2}{\abs{\alpha_h}^2 + \abs{\alpha_{-h}^2}} = \abs{\alpha_h}^2 = \abs{\braket{\psi}{\Z_N^{(h)} * x}_H}^2. \qedhere \]
\end{proof}

One intriguing question regarding the quantum money scheme above is whether a more ``direct'' verification algorithm can be designed. In the original scheme, using the quantum Fourier transform, we could directly obtain $h$ from the money state $\ket{\Z_N^{(h)} * x}$ and compare it to the given $h$. However, this approach did not work when we used the Hartley transform: we ended up with the state in \eqref{eq:ver-sign}, from which it is unclear how to extract $h$ without collapsing the state. To address this, we design an algorithm for computing $h$ that utilizes quantum walks.


\section{Computing The Serial Number Using Quantum Walks}

In this section, we show how to use continuous-time quantum walks to compute the serial number of a given money state. Quantum walks are quantum analogs of classical random walks and play a fundamental role in quantum algorithms. Similar to the classical case, there are two types of quantum walks: continuous-time and discrete-time, both of which exhibit significantly different behaviors compared to classical random walks. In particular, a key distinction between quantum and classical random walks lies in the wave nature of quantum mechanics. In a quantum walk, interference effects can occur, allowing computational advantage for exploration of the graph compared to classical walks, e.g., \cite{aharonov2001quantum, ambainis2007quantum, childs2003exponential, childs2007quantum}.

For a graph $\Gamma$, the dynamics of a continuous-time classical walk on $\Gamma$ is described by the differential equation $\frac{d}{dt} q(t) = Lq(t)$, where $L$ is the Laplacian of $\Gamma$ and $q(t)$ describes the state of the walk at time $t$. In the quantum setting, the vector $q(t)$ is replaced by a quantum state $\ket{\psi(t)}$ and the dynamics of the walk is given by the Schr\"{o}dinger equation
\begin{equation}
    \label{equ:Schrodinger}
    i\frac{d}{dt}\ket{\psi(t)} = L\ket{\psi(t)},
\end{equation}
where $L$ plays the role of the Hamiltonian of the quantum system. The solution to this differential equation can be written in closed form as:
\[ \ket{\psi(t)} = e^{-iLt} \ket{\psi(0)}. \]
In practice, we often (including this work) use the adjacency matrix $A$ of $\Gamma$ as the Hamiltonian of the walk, so the unitary for the state transition becomes $\exp(-iAt)$.   

A discrete-time quantum walk on $\Gamma$ can be described using the adjacency matrix $A$ by making $A$ into a stochastic matrix \cite{szegedy2004quantum}. More precisely, we define a matrix $P$ as $P_{jk} = A_{jk} / \deg(j)$, where $\deg(j)$ is the degree of the vertex $v_j$ in $\Gamma$. If the $\Gamma$ has $N$ vertices, the discrete-time quantum walk on $\Gamma$ is defined by a unitary operator on the finite Hilbert space $\C^N \times \C^N$ as follows. Define the states
\[ \ket{\phi_j} = \frac{1}{\sqrt{\deg(j)}} \sum_{k = 1}^N \sqrt{P_{jk}} \ket{j, k}, \]
and the project and swap operators
\[ \Pi = \sum_{j = 1}^N \ket{\phi_j} \bra{\phi_j}, \quad S = \sum_{j, k = 1}^N \ket{j, k} \bra{k, j}. \]
Then, a step of the quantum walk is defined by the unitary $W = S(2\Pi - \mathds{1})$. 

\paragraph{Simulating continuous-time walks.}
It was shown by Childs \cite{childs2010relationship} and Berry, Childs, and Kothari \cite{berry2015hamiltonian} that continuous-time quantum walks can be simulated using discrete-time quantum walks. Since we will use their methods, we briefly outline the necessary notations. Given a Hamiltonian $H$ of dimension $N = 2^n$, let $\opnorm{H}_{\mathrm{max}} = \max_{i, j} \abs{H_{ij}}$. The Hilbert space on which $H$ acts is expanded from $\C^N$ to $\C^{2N} \times \C^{2N}$ by adding an ancilla qubit in the state $\ket{0}$ and duplicating the entire new space. To define the discrete-time walk operator, we begin by defining the orthonormal set of states
\begin{align}
    \ket{\phi_{j0}} & := \frac{1}{\sqrt{d}} \sum_{\ell \in F_j} \ket{\ell} \left( \sqrt{\frac{H_{j\ell}^*}{K}} \ket{0} + \sqrt{1 - \frac{\abs{H_{j\ell}^*}}{K}} \ket{1} \right), \label{eq:walk-sim} \\
    \ket{\phi_{j1}} & := \ket{0} \ket{1} \nonumber,
\end{align}
where $F_j$ is the set of indices corresponding to the nonzero elements in column $j$ of $H$, and $K \ge \opnorm{H}_{\mathrm{max}}$ is a constant. 

Based on the states $\ket{\phi_{jb}}$, we define an isometry $T: \C^{2N} \to \C^{2N} \times \C^{2N}$ by
\begin{equation}
    \label{eq:walk-isom}
    T := \sum_{j = 0}^{N - 1} \sum_{b \in \{0, 1\}} (\ket{j}\bra{j} \otimes \ket{b}\bra{b}) \otimes \ket{\phi_{jb}}.
\end{equation}
The discrete-time quantum walk is then defined by $W = iS(2TT^* - \mathds{1})$, where $S$ is the swap operator acting as $S \ket{j_1} \ket{b_1} \ket{j_2} \ket{b_2} = \ket{j_2} \ket{b_2} \ket{j_1} \ket{b_1}$ for all $0 \le j_1, j_2 \le N - 1$ and $b_1, b_2 \in \{0, 1\}$. To efficiently simulate the continuous-time quantum walk defined by the Hamiltonian $H$, we need to be able to efficiently apply the isometry $T$ and its inverse $T^*$, as well as the walk operator $W$. Assuming black-box query access to $H$, we have the following:
\begin{theorem}[{\cite[Theorem 1]{berry2015hamiltonian}}]
    \label{thm:sparse-sim}
    Let $H$ be a $d$-sparse Hamiltonian $H$ acting on $n$ qubits. Then, the unitary $e^{-iHt}$ can be approximated to within error $\epsilon$ using
    \[ O\left( \tau \frac{\log(\tau / \epsilon)}{\log\log(\tau / \epsilon)} \right) \]
    queries to $H$ and
    \[ O\left( \tau [n + \log^{5 / 2}(\tau / \epsilon)] \frac{\log(\tau / \epsilon)}{\log\log(\tau / \epsilon)} \right) \]
    additional elementary gates, where $\tau = d \opnorm{H}_{\mathrm{max}} t$.
\end{theorem}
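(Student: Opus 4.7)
The plan is to establish the theorem via the Szegedy-style quantum walk simulation of Berry, Childs, and Kothari. The first step would be a spectral analysis of the walk operator $W = iS(2TT^* - \mathds{1})$: I would verify that the subspace $\mathrm{span}\{T\ket{j}\ket{b}, ST\ket{j}\ket{b}\}$ is invariant under $W$ and that, on each two-dimensional block associated with an eigenvector $\ket{\lambda}$ of $H/(dK)$, the walk $W$ acts as a rotation by angle $\pm\arcsin(\lambda/(dK))$. This reduces to computing the matrix $T^*ST$ and observing that its nonzero entries reproduce $H/(dK)$, which is the central spectral correspondence underlying every Szegedy-type walk and follows immediately from the choice of amplitudes $\sqrt{H_{j\ell}^*/K}$ inside $\ket{\phi_{j0}}$ in \eqref{eq:walk-sim}.

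With the spectral mapping in hand, I would approximate $e^{-iHt}$ by a polynomial in $W$ using the Jacobi--Anger expansion
\[ e^{-i\tau \sin\theta} = \sum_{k \in \mathbb{Z}} J_k(\tau) e^{-ik\theta}, \]
with $\tau = d \opnorm{H}_{\mathrm{max}} t$, truncated to $\abs{k} \le r$ for a parameter $r$ to be chosen. Each term $e^{-ik\theta}$ corresponds to $W^k$ on the invariant subspace, so the truncated sum can be realized as a linear combination of unitaries, controlled on an index register of size $O(\log r)$, followed by oblivious amplitude amplification to uncompute the control. The classical tail bound $\abs{J_k(\tau)} \le (e\tau/(2\abs{k}))^{\abs{k}}$ for $\abs{k} > \tau$ shows that $r = \Theta(\tau + \log(\tau/\epsilon)/\log\log(\tau/\epsilon))$ suffices for total error $\epsilon$, which directly produces the stated query count of $O(\tau \log(\tau/\epsilon)/\log\log(\tau/\epsilon))$.

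The main obstacle is the coherent bookkeeping around the non-integer ``time'' $\tau$: one must prepare an ancilla whose amplitudes are $\sqrt{\abs{J_k(\tau)}}$ with the correct signs using $O(\log r)$ qubits of arithmetic, and then apply oblivious amplitude amplification to boost success probability to one without disturbing the action on the input. This is where the additive $\log^{5/2}(\tau/\epsilon)$ gate overhead enters, from the coherent classical arithmetic inside the amplification rounds. Each application of $W$ itself costs $O(1)$ queries to the sparse oracle for $H$ plus $O(n)$ elementary gates for the swap $S$ and for the controlled preparation of the states $\ket{\phi_{jb}}$. Since the statement is quoted verbatim as \cite[Theorem 1]{berry2015hamiltonian}, my proposal is really to record the above spectral reduction and defer the precise error analysis and amplification accounting to their work.
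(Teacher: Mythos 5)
This statement is imported verbatim from Berry, Childs, and Kothari, and the paper offers no proof of it --- it is used as a black box --- so there is no internal argument to compare against. What you have written is a reconstruction of the original BCK proof, and in outline it is the right one: the spectral correspondence $T^*ST \propto H/(dK)$, the two-dimensional invariant subspaces on which $W$ acts with eigenvalues $\pm e^{\pm i\arcsin(\lambda/(dK))}$, the Jacobi--Anger expansion realized as a linear combination of powers of $W$, and oblivious amplitude amplification with coherent arithmetic accounting for the $n + \log^{5/2}(\tau/\epsilon)$ gate overhead per walk step.

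The one place your accounting does not go through as stated is the passage from the truncation order to the query count. The Bessel tail bound gives $r = \Theta\bigl(\tau + \log(\tau/\epsilon)/\log\log(\tau/\epsilon)\bigr)$ for a \emph{single} linear combination; this is an additive bound and cannot ``directly produce'' the multiplicative bound $O\bigl(\tau\,\log(\tau/\epsilon)/\log\log(\tau/\epsilon)\bigr)$. The multiplicative form arises because oblivious amplitude amplification requires the LCU normalization $\sum_k \abs{J_k(\tau_{\mathrm{seg}})}$ to equal a fixed constant (BCK arrange it to be $2$, which forces $\tau_{\mathrm{seg}} = \ln 2$), so the evolution must be cut into $\Theta(\tau)$ segments, each implemented with $O(\log(\tau/\epsilon)/\log\log(\tau/\epsilon))$ walk steps after distributing an error budget of $\epsilon/\Theta(\tau)$ per segment. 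Without this segmentation the amplification step fails for large $\tau$, since the success amplitude of the unamplified LCU is no longer a known constant; with it, the stated query and gate counts follow. Everything else in your sketch --- including deferring the sign bookkeeping for the $\sqrt{\abs{J_k}}$ amplitudes and the precise error analysis to the source --- is consistent with how the cited theorem is actually proved.
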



\subsection{Group Action Quantum Walks}
\label{sec:ga-qwalk}

Let $G$ be an abelian group and let $Q = \{q_1, q_2, \dots, q_k\} \subset G$ be a symmetric set, i.e., $q \in Q$ if and only if $-q \in Q$. The Cayley graph associated to $G$ and $Q$ is a graph $\Gamma = (V, E)$, where the vertex set is $V = G$, and the edge set $E$ consists of pairs $(a, b) \in G \times G$ such that there exists $q \in Q$ with $b = q + a$. The adjacency matrix of $\Gamma$ can be expressed as
\[ A = \sum_{a \in G} \lambda_a \ket{\hat{a}} \bra{\hat{a}}, \]
where $\ket{\hat{a}}$ is the quantum Fourier transform of $\ket{a}$. The eigenvalues $\lambda$ are given by
\[ \lambda_a = \sum_{q \in Q} \chi(a, q). \]
Note that the eigenvectors $\ket{\hat{a}}$ of $A$ depend only on $G$ and not on the set $Q$. 

Cayley graphs can also be constructed using group actions. Given a regular group action $(G, X, *)$ with a fixed element $x \in X$ and a set  $Q = \{q_1, q_2, \dots, q_k\} \subset G$, let $\Gamma = (X, E)$ be a graph with vertex set $X$ and edge set consisting of pairs $(x, y) \in X \times X$ such that $y = q * x$ for some $q \in Q$. The adjacency matrix of $\Gamma$ is
\[ A = \sum_{h \in G} \lambda_h \ket{G^{(h)} * x} \bra{G^{(h)} * x}, \]
where $\lambda_h = \sum_{q \in Q} \chi(h, q)$. Again, the eigenvectors $\ket{G^{(h)} * x}$ depend only on $G$. This construction of Cayley graphs from group actions generalizes the previous construction. Specifically, if we set $X = G$ and the action $*$ as group operation, we recover the original construction.

Since the action $(G, X, *)$ is regular, the two constructions yield the same graph up to isomorphism. In the first graph, the vertex set is $G$, and the rows and columns of the adjacency matrix are indexed by the elements of $G$, whereas in the second graph, the vertex set is $X$, and the rows and columns of the adjacency matrix are indexed by the elements of $X$. The isomorphism between the two graphs is induced by the bijection
\[
    \setlength{\arraycolsep}{1mm}
    \begin{array}{rlll}
        \phi: & G & \longrightarrow & X \\
        & g & \longmapsto & g * x.
    \end{array}
\]

Assuming the group action is a cryptographic group action, the isomorphism $\phi$ is a one-way function: given $g \in G$, it is easy to compute $g * x$, whereas given $(x, g * x)$, it is hard to compute $g$. Therefore, although these graphs are mathematically the same, computational problems based on them require fundamentally different techniques. A notable example is the implementation of quantum walks on these graphs.\footnote{Implementations of continuous-time quantum walks in the context of cryptographic group actions have previously appeared in \cite{booher2024failing, doliskani2023sample}, in the setting of supersingular isogeny graphs. These walks can be interpreted as group-action walks.} From this point onward, assume $Q$ is a small set, i.e., $\abs{Q} = \poly(\log \abs{G})$. In the first graph, we can efficiently implement the walks $e^{-iAt}$ even for values of $t$ that are exponentially large in $\log \abs{G}$. This follows from the fact that
\begin{equation}
    \label{eq:adjmat-fb}
    e^{-iAt} = \qft_G \sum_{a \in G} e^{-i\lambda_a t} \ket{a} \bra{a} \qft_G^*.
\end{equation}
The quantum Fourier transform $\qft_G$ and its inverse can be applied in $\poly(\log \abs{G})$ operations. The diagonal unitary in the middle, which is a phase computation $\ket{a} \mapsto e^{-i\lambda_a t} \ket{a}$, is also efficient because $\lambda_a$ can be computed classically to arbitrary precision in $\poly(\log \abs{G})$ time.

The situation for group actions is less straightforward. While the states $\ket{G^{(h)} * x}$ are analogous to the states $\ket{\hat{h}}$, the involvement of the action $*$ in the former introduces computational challenges. Specifically, it becomes hard to apply transformations beyond the action $\ket{y} \mapsto \ket{a * y}$ for $a \in G$. Consequently, no decomposition analogous to \eqref{eq:adjmat-fb} exists in the context of group actions. Instead, we are left with the spectral expression
\[ e^{-iAt} = \sum_{h \in G} e^{-i\lambda_h t} \ket{G^{(h)} * x} \bra{G^{(h)} * x}. \]
Despite this limitation, the sparsity and structure of the matrix $A$ allow us to demonstrate in the next section that for $t = \poly(\log \abs{G})$, the walk $e^{-iAt}$ can still be efficiently approximated to polynomial accuracy.   

\paragraph{Simulating group action quantum walks.}
Assume $t = \poly(\log\abs{G})$. We show that the walk $W = e^{-iAt}$ can be efficiently simulated using the discrete-time quantum walk technique from \cite{childs2010relationship, berry2015hamiltonian}. Consider the isometry $T$ in \eqref{eq:walk-isom}, where the states $\ket{\phi_{b}}$ are defined in \eqref{eq:walk-sim}. It suffices to show that $T$ and its inverse $T^*$, and the discrete-time walk $W = iS(2TT^* - \mathds{1})$ can be implemented efficiently for group actions. Note that for the Cayley graph $\Gamma = (X, E)$ of the group action $(G, X, *)$, the Hamiltonian $H$ is the adjacency matrix $A$ of $\Gamma$, which is indexed by the elements of $X$. The isometry $T$ then becomes
\begin{align*}
    T
    & := \sum_{y \in X} \sum_{b \in \{0, 1\}} (\ket{y}\bra{y} \otimes \ket{b}\bra{b}) \otimes \ket{\phi_{yb}} \\
    & = \sum_{y \in X} \sum_{b \in \{0, 1\}} \ket{y} \ket{b} \ket{\phi_{yb}} \bra{y} \bra{b}.
\end{align*}

Because of the structure of $\Gamma$, the states \eqref{eq:walk-sim} simplify as follows. First, since the set $Q$ is symmetric, the graph $\Gamma$ is undirected and the adjacency matrix $A$ is symmetric with non-negative entries; the nonzero entries of $A$ are all equal to $1$. Therefore, we can set $K = \opnorm{A}_{\mathrm{max}} = 1$. As a result, for $y \in X$, the state $\ket{\phi_{y0}}$ becomes
\[ \ket{\phi_{y0}} = \frac{1}{\sqrt{\abs{Q}}} \sum_{q \in Q} \ket{q * y} \ket{0}. \]

We now show how to implement $T$ by constructing efficient unitaries $U_0$ and $U_1$ defined as $U_b: \ket{0} \ket{y, b} \ket{0, 0} \mapsto \ket{0} \ket{y, b} \ket{\phi_{yb}}$, $b = 0, 1$, where the first register is an ancilla. Since $\ket{\phi_{y1}} = \ket{0} \ket{1}$, the unitary $U_1$ can be implemented efficiently. The unitary $U_0$ can be implemented as follows. Since $Q$ is a small set, we can construct an efficient unitary $V_Q: \C^Q \to \C^Q$ such that $V_Q \ket{0} = \abs{Q}^{-1 / 2} \sum_{q \in Q} \ket{q}$. Applying the unitary $V_Q \otimes \mathds{1}$ to the state $\ket{0} \ket{y, 0} \ket{0, 0}$ results in 
\[ \frac{1}{\sqrt{\abs{Q}}} \sum_{q \in Q} \ket{q} \ket{y, 0} \ket{0, 0}. \]
Next, we apply the unitary $V_1: \ket{q} \ket{y, 0} \ket{0, 0} \mapsto \ket{q} \ket{y, 0} \ket{q * y, 0}$. Following this, we uncompute the first register to obtain $\ket{0} \ket{y, 0} \ket{\phi_{y0}}$. Uncomputing the first register involves applying the unitary operation $V_2: \ket{q} \ket{y, 0} \ket{q * y, 0} \mapsto \ket{0} \ket{y, 0} \ket{q * y, 0}$, which can be implemented efficiently since $Q$ is small and we can recover $q$ from the pair $(y, q * y)$ by checking against all the elements in $Q$.

To implement $U_0^*: \ket{0} \ket{y, 0} \ket{\phi_{y0}} \mapsto \ket{0} \ket{y, 0} \ket{0, 0}$, we first apply $V_2^*$, followed by $V_1^*$, and finally $V_Q^* \otimes \mathds{1}$. Since each of these unitaries can be applied efficiently, $U_0^*$ can also be implemented efficiently. Now, the isometry $T$ can be constructed using a conditional unitary $U_T$, which applies $U_0$ or $U_1$ based on the qubit containing $b$. 

Finally, we show how the walk unitary $W = iS(2TT^* - \mathds{1})$ can be applied efficiently. To do this, we must show that the reflection $2TT^* - \mathds{1}$ can be implemented efficiently. Observe that
\begin{align*}
    2\ket{0} \bra{0} \otimes TT^* - \mathds{1}
    & = 2\ket{0} \bra{0} \otimes \sum_{y \in X} \sum_{b \in \{0, 1\}} \ket{y, b} \ket{\phi_{yb}} \bra{y, b} \bra{\phi_{yb}} - \mathds{1} \\
    & = 2\sum_{y \in X} \sum_{b \in \{0, 1\}} \ket{0} \ket{y, b} \ket{\phi_{yb}} \bra{0} \bra{y, b} \bra{\phi_{yb}} - \mathds{1} \\
    & = U_T \Big( 2\sum_{y \in X} \sum_{b \in \{0, 1\}} \ket{0} \ket{y, b} \ket{0, 0} \bra{0} \bra{y, b} \bra{0, 0} - \mathds{1} \Big) U_T^* \\
    & = U_T (2 \ket{0} \bra{0} \otimes \mathds{1}_{X, b} \otimes \ket{0, 0} \bra{0, 0} - \mathds{1}) U_T^*.
\end{align*}
Since $U_T, U_T^*$, and $2 \ket{0} \bra{0} \otimes \mathds{1}_{X, b} \otimes \ket{0, 0} \bra{0, 0} - \mathds{1}$ can all be applied efficiently, it follows that $2\ket{0} \bra{0} \otimes TT^* - \mathds{1}$ can also be applied efficiently. Now, for any state $\ket{\psi}$,
\[ (2\ket{0} \bra{0} \otimes TT^* - \mathds{1}) \ket{0} \ket{\psi} = \ket{0} (2TT^* - \mathds{1}) \ket{\psi}. \]

\subsection{Computing The Serial Number}

Given a state $\ket{\Z_N^{(h)} * x}_H$, we show how to compute $h$ using continuous-time quantum walks. For any $u \in \Z_N$, define a Cayley graph $\Gamma = (\Z_N, E)$ with the generating set $Q = \{-u, u\}$, as described in Section \ref{sec:ga-qwalk}. Let $A$ denote the adjacency matrix of $\Gamma$. The eigenvectors and corresponding eigenvalues of $A$ are $\ket{\Z_N^{(h)} * x}$ and $\lambda_h = 2\cos(2\pi uh / N)$, respectively, for $h \in \Z_N$. According to Theorem \ref{thm:sparse-sim} and the subsequent discussion, the unitary $W = e^{iAt}$ can be efficiently simulated to exponential accuracy. We need the following lemma.

\begin{lemma}
    \label{lem:h-eigen}
    The money state $\ket{\Z_N^{(h)} * x}_H$ is an eigenstate of $W$ with eigenvalue $e^{i\lambda_h t}$.
\end{lemma}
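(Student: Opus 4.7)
The plan is to reduce the claim to showing that $\ket{\Z_N^{(h)} * x}_H$ is an eigenstate of $A$ with eigenvalue $\lambda_h$, from which the statement about $W = e^{iAt}$ follows immediately by the spectral calculus, since $W$ acts on an eigenvector of $A$ with eigenvalue $\mu$ as multiplication by $e^{i\mu t}$.

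First, I would expand $A \ket{\Z_N^{(h)} * x}_H$ directly using the definition of the Cayley adjacency matrix for $Q = \{-u, u\}$, which acts as $A \ket{y} = \ket{u * y} + \ket{(-u) * y}$. Substituting the definition of $\ket{\Z_N^{(h)} * x}_H$ yields
\[
A \ket{\Z_N^{(h)} * x}_H = \frac{1}{\sqrt{N}} \sum_{g \in \Z_N} \cas\!\Big(\frac{2\pi g h}{N}\Big)\Big(\ket{(g+u)*x} + \ket{(g-u)*x}\Big).
\]
Next, I would perform the change of variables $g \mapsto g - u$ in the first summand and $g \mapsto g + u$ in the second (both are bijections of $\Z_N$), regrouping the sum as
\[
\frac{1}{\sqrt{N}} \sum_{g \in \Z_N} \Big[\cas\!\Big(\frac{2\pi g h}{N} - \frac{2\pi u h}{N}\Big) + \cas\!\Big(\frac{2\pi g h}{N} + \frac{2\pi u h}{N}\Big)\Big] \ket{g * x}.
\]

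The main step is then the $\cas$ sum-to-product identity
\[
\cas(\alpha + \beta) + \cas(\alpha - \beta) = 2\cos(\beta)\cas(\alpha),
\]
which follows from the standard sum-to-product formulas for sine and cosine; this is the only substantive computation, and is routine. Applying it with $\alpha = 2\pi g h / N$ and $\beta = 2\pi u h / N$ turns the bracketed expression into $2\cos(2\pi u h / N)\cas(2\pi g h / N) = \lambda_h \cas(2\pi g h / N)$, so the whole expression collapses to $\lambda_h \ket{\Z_N^{(h)} * x}_H$.

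I do not foresee any real obstacle here; the only thing to be slightly careful about is that the change of variables is done modulo $N$ so that the sum ranges remain correct, and that the trigonometric identity is applied to the right grouping of terms. Once $A \ket{\Z_N^{(h)} * x}_H = \lambda_h \ket{\Z_N^{(h)} * x}_H$ is established, applying $W = e^{iAt}$ gives $W \ket{\Z_N^{(h)} * x}_H = e^{i\lambda_h t} \ket{\Z_N^{(h)} * x}_H$, completing the proof.
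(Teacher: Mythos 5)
Your proof is correct, but it takes a genuinely different route from the paper's. The paper never touches the action of $A$ on the basis states $\ket{g*x}$: it instead invokes the spectral decomposition $A = \sum_{g} \lambda_g \ket{\Z_N^{(g)} * x}\bra{\Z_N^{(g)} * x}$ in the \emph{complex} Fourier eigenbasis, expands the Hartley state via the identity $\cht = \frac{1-i}{2}\cft + \frac{1+i}{2}\cft^*$ as $\ket{\Z_N^{(h)}*x}_H = \frac{1-i}{2}\ket{\Z_N^{(h)}*x} + \frac{1+i}{2}\ket{\Z_N^{(-h)}*x}$, applies $e^{iAt}$ to each Fourier component, and then recombines using the symmetry $\lambda_h = \lambda_{-h}$ (which holds precisely because $Q=\{-u,u\}$ is symmetric). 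You work entirely in the real picture: you apply $A\ket{y} = \ket{u*y} + \ket{(-u)*y}$ directly, shift the summation variable, and collapse the two terms with the identity $\cas(\alpha+\beta)+\cas(\alpha-\beta) = 2\cos(\beta)\cas(\alpha)$, thereby proving the stronger intermediate fact that $\ket{\Z_N^{(h)}*x}_H$ is an eigenvector of $A$ itself with eigenvalue $\lambda_h$, from which the claim about $W=e^{iAt}$ is immediate. Your argument is more self-contained (it does not presuppose the spectral decomposition of $A$ or the Hartley--Fourier relation), while the paper's is shorter given the machinery it has already set up and makes transparent exactly where the symmetry of $Q$ is used; in your version that same symmetry is what makes the two shifted terms pair up into the sum-to-product identity. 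Both the trigonometric identity and the modular change of variables you flag are routine and hold exactly as you state them.
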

\begin{proof}
    We have 
    \begin{align*}
        e^{iAt} \ket{\Z_N^{(h)} * x}_H
        & = \sum_{g \in \Z_N} e^{i\lambda_gt} \ket{\Z_N^{(g)} * x} \braket{\Z_N^{(g)} * x}{\Z_N^{(h)} * x}_H \\
        & = \sum_{g \in \Z_N} e^{i\lambda_gt} \ket{\Z_N^{(g)} * x} \bra{\Z_N^{(g)} * x} \Big( \frac{1 - i}{2} \ket{\Z_N^{(h)} * x} + \frac{1 + i}{2} \ket{\Z_N^{(-h)} * x} \Big) \\
        & = e^{i\lambda_{h}t} \frac{1 - i}{2} \ket{\Z_N^{(h)} * x} + \frac{1 + i}{2} e^{i\lambda_{-h}t} \ket{\Z_N^{(-h)} * x} \\
        & = e^{i\lambda_{h}t} \ket{\Z_N^{(h)} * x}_H,
    \end{align*}
    where the second equality follows from the identity in \eqref{eq:ht-ft}, and the last equality follows from the fact that $\lambda_h = \lambda_{-h}$.
\end{proof}

If we choose $t = \poly(\log N)$, it follows from Lemma \ref{lem:h-eigen} that we can run the phase estimation algorithm with the unitary $W$ and the eigenstate $\ket{\Z_N^{(h)} * x}_H$ to compute an estimate $\tilde{\lambda}_h$ of $\lambda_h$ such that $\abs{\tilde{\lambda}_h - \lambda_h} \le 1 / \poly(\log N)$. From the estimate $\tilde{\lambda}_h$ we can obtain a value $0 \le \theta \le 1$ such that
\[ \abs[\Big]{\theta - \frac{uh}{N}} \le \frac{1}{\poly(\log N)}. \]
Since the phase estimation algorithm can be executed for different values of $u$, we can obtain different estimates of $uh / N$. It was shown in \cite{zhandry2024quantum} that for carefully chosen values of $u$, these estimates provide sufficient information to fully recover $h$.

\newpage
\bibliographystyle{plain}
\bibliography{references}

\end{document}